\newtheorem{thm}{Theorem}
\newtheorem{lem}{Lemma}
\newtheorem{defn}{Definition}
\newtheorem{prop}{Proposition}
\newtheorem{prob}{Problem}
\newcommand{\Prob}{\mathbb{P}}
\newcommand{\Exp}{\mathbb{E}}
\newcommand{\deff}{\stackrel{\triangle}{=}}
\newcommand{\bs}[1]{\ensuremath{\boldsymbol{#1}}}
\newcommand{\bp}{\ensuremath{\bs p}\xspace}
\newcommand{\bq}{\ensuremath{\bs q}\xspace}
\newcommand{\bw}{\ensuremath{\bs w}\xspace}
\newcommand{\bx}{\ensuremath{\bs x}\xspace}
\newcommand{\by}{\ensuremath{\bs y}\xspace}
\newcommand{\bz}{\ensuremath{\bs z}\xspace}
\newcommand{\bW}{\ensuremath{\bs W}\xspace}
\newcommand{\bX}{\ensuremath{\bs X}\xspace}
\newcommand{\bY}{\ensuremath{\bs Y}\xspace}
\newcommand{\pci}[1]{\ensuremath{\bs{p}_{\boldsymbol{c}_{i}}}\xspace}
\begin{document}
\title{Scalable Underapproximation for the Stochastic Reach-Avoid Problem for
High-Dimensional LTI Systems using Fourier Transforms}
\author{Abraham P. Vinod and Meeko M. K. Oishi
\thanks{This material is based upon work supported by the National Science Foundation under Grant Number IIS-1528047, CMMI-1254990 (Oishi, CAREER), and CNS-1329878. Any opinions, findings, and conclusions or recommendations expressed in this material are those of the authors and do not necessarily reflect the views of the National Science Foundation. 
    \newline
    Abraham Vinod and Meeko Oishi are with Electrical and Computer Engineering,
University of New Mexico, Albuquerque, NM; e-mail: aby.vinod@gmail.com,
oishi@unm.edu (corresponding author)}
}
\date{}
\maketitle

\begin{abstract} 

We present a scalable underapproximation of the terminal hitting time stochastic
reach-avoid probability at a given initial condition, for verification of
high-dimensional stochastic LTI systems.  
While several approximation techniques have been proposed to alleviate the curse of dimensionality associated with dynamic programming, these techniques are limited and cannot handle larger, more realistic systems.  We present a scalable method that uses Fourier transforms to compute an underapproximation of the reach-avoid probability for systems with disturbances with arbitrary probability densities.  We characterize sufficient conditions for Borel-measurability of the value functions. We exploit fixed control sequences parameterized by the initial condition (an open-loop control policy) to generate the underapproximation. For Gaussian disturbances, the underapproximation can be obtained using existing efficient algorithms by solving a convex optimization problem. Our approach produces non-trivial lower bounds and is demonstrated on a chain of integrators with 40 states.

\end{abstract}

\begin{IEEEkeywords}
    Stochastic reachability; Stochastic optimal control; Open-loop control;
    Convex optimization.
\end{IEEEkeywords}

\section{Introduction}
\label{sec:introduction}

Reachability analysis of discrete-time stochastic dynamical systems is an
established verification tool that provides probabilistic guarantees of safety
or performance, and has been applied to problems in
fishery management and mathematical finance~\cite{SummersAutomatica2010}, motion
planning in robotics~\cite{VinodHSCC2017,HomChaudhuriACC2017,MaloneHSCC2014},
spacecraft docking~\cite{LesserCDC2013}, and autonomous
survelliance~\cite{kariotoglou2015multi}.  In~\cite{SummersAutomatica2010}, two
classes of problems characterize verification over a finite
horizon --- first hitting time and terminal hitting time -- and dynamic programming approaches are formulated to solve both (similarly to~\cite{AbateAutomatica2008,AbateHSCC2007}). %
We focus on the finite time
horizon \emph{terminal} hitting time stochastic reach-avoid problem (referred to
here as the {\em terminal time problem} for convenience), that is, computing
the probability of hitting a target set at the terminal time, while avoiding an
unsafe set during all the preceding time steps.
Specifically, we construct an underapproximation to the terminal time problem 
from a known initial point, as opposed to 
the typical stochastic reach-avoid problem.  This could be used
as a query, for example, in evaluating feasibility of an initial trajectory 
an optimization problem. 

The dynamic programming-based discretization approach (DPBDA), proposed
in~\cite{AbateHSCC2007}, approximately computes value
functions for the terminal time problem, but relies on
gridding, and hence suffers from the well-known curse of dimensionality.  
Attempts to circumvent this problem, via approximate dynamic programming
\cite{KariotoglouECC2013,KariotoglouSCL2016, ManganiniCYB2015}, Gaussian
mixtures \cite{KariotoglouSCL2016}, particle filters \cite{ManganiniCYB2015,
LesserCDC2013}, and convex chance-constrained optimization \cite{LesserCDC2013,
kariotoglou2015multi}, have been applied to systems that are at
most 10-dimensional -- far beyond the scope of what is
possible with DPBDA, but not scalable to larger problems.

In this paper, %
we first characterize
sufficient conditions for Borel-measurability of the value functions 
for the terminal time problem (characterized so far only for the first 
hitting time problem \cite{DingAutomatica2013}).
Using
conditional expectations, we then establish that 
an open-loop formulation 
provides an underapproximation of the stochastic reach-avoid probability
for linear systems \cite{LesserCDC2013}.
We propose a scalable
Fourier transform-based underapproximation (FTBU), for the terminal time problem, exploiting our prior work on uncontrolled stochastic reachable sets \cite{VinodHSCC2017}. For
an arbitrary probability density, the FTBU
solves an optimization problem with a multi-dimensional integration as the
objective function. 
For Gaussian disturbances, 
the objective function can be computed 
efficiently via existing algorithms \cite{GenzJCGS1992}, and
the optimization problem is log-concave.
Our approach does not require gridding of state, input, or disturbance spaces, and has
low memory requirements in contrast to DPBDA. 
Our main contribution is twofold: 1) a Fourier transform-based
underapproximation of the terminal hitting time stochastic reach-avoid
probability from a known initial condition, based on open-loop control
sequences, and 2) the underlying theory that enables us to exploit measurability
and convexity properties to assure a computationally feasible approach.  We
extend our previous work on Fourier transform-based stochastic reachable sets
for uncontrolled systems \cite{VinodHSCC2017} to systems with control inputs,
although here we do not seek to compute the stochastic reach-avoid
set~\cite{GleasonCDC2017}.

In Section~\ref{sec:preliminaries}, we describe the terminal time problem, its open-loop
approximation, and relevant properties from probability theory and Fourier
analysis.  Section~\ref{sec:theorRes}
presents sufficient conditions for Borel-measurability, and establishes the
underapproximation result linking the problems 
in~\cite{SummersAutomatica2010} and~\cite{LesserCDC2013}.
Section~\ref{sec:solveProbB} presents the FTBU and specialized results for Gaussian disturbances. 
We demonstrate scalability in Section~\ref{sec:numex}, through application to a 40D chain of integrators. 
Section~\ref{sec:conc} concludes the work.

\section{Preliminaries and Problem Formulation}
\label{sec:preliminaries}

\label{sub:notation}

We denote the Borel $\sigma$-algebra by $ \mathscr{B}(\cdot)$, a discrete-time
time interval by $ \mathbb{N}_{[a,b]}$ for $a,b\in \mathbb{N}$ and $a\leq b$, 
which inclusively enumerates all integers in between $a$ and $b$,
random vectors with
bold case, and non-random vectors with an overline. The indicator function of a
non-empty set $ \mathcal{S}$ is denoted by $1_{\mathcal{S}}(\bar{y})$, such that
$1_{\mathcal{S}}(\bar{y})=1$ if $\bar{y}\in \mathcal{S}$ and is zero otherwise.
We denote the
$p$-dimensional identity matrix by $I_p$, and the matrix with all
entries as ones by $\bar{1}_{p\times q}\in \mathbb{R}^{p\times q}$.

\subsection{Probability theory}

A random vector $\by$ is a measurable transformation defined in the probability
space $(\Omega,\mathscr{Y}, \Prob)$ with sample space $\Omega$, $\sigma$-algebra
$ \mathscr{Y}$, and probability measure over $ \mathscr{Y}$, $\Prob$.  A
sub-$\sigma$-algebra of $ \mathscr{Y}$ is a $\sigma$-algebra whose members also
belong to $ \mathscr{Y}$.
The \emph{minimal} $\sigma$-algebra of $\by$, the smallest sub-$\sigma$-algebra
of $ \mathscr{Y}$ over which $\by$ is measurable, is denoted by
$\sigma(\by)\subset \mathscr{Y}$.  We typically consider Borel-measurable random
vectors, $\by: \mathbb{R}^p \rightarrow \mathbb{R}^p$ with $\Omega=
\mathbb{R}^p$ and $ \mathscr{Y}=\sigma(\by)= \mathscr{B}( \mathbb{R}^p)$. 
For $N\in \mathbb{N}$, a random process is a sequence of random vectors
${\{\by_k\}}_{k=0}^N$ where the random vectors $\by_k$ are defined in the
probability space $(\Omega,\mathscr{Y}, \Prob)$. The random vector $\bY={[\by_0\
\by_1\ \ldots\ \by_N]}^\top$ is defined in the probability space $(\Omega^{N+1},
\sigma(\bigtimes_{k=0}^N \mathscr{Y}_k), \Prob_{\bY})$, with $\Prob_{\bY}$
induced from $\Prob$.  See~\cite{GubnerProbability2006,ChowProbability1997} for
details.

Conditional expectations transforms random variable $\bp$ in
$(\Omega,\mathscr{P}, \Prob)$ whose mean exists ($ \vert \Exp\left[
\bp\right]\vert \leq \infty$) to a sub-$\sigma$-algebra $ \mathscr{Q}$, i.e, $
\mathscr{Q}$ is a $\sigma$-algebra with all its members containing in $
\mathscr{P}$. For a $ \mathscr{Q}$-measurable random variable $\bq$ such that
$\int_{ \mathcal{S}} \bp d\Prob=\int_{ \mathcal{S}} \bq d\Prob$ for all $
\mathcal{S}\in \mathscr{Q}$, $\bq= \Exp\left[ \bp \middle\vert \mathscr{Q}
\right]$ almost surely (a.s.)~\cite[Sec. 7.1, Thm. 1]{ChowProbability1997}. 
\begin{enumerate}
    \item[P1)] For $\mathscr{Y}$-measurable random variables $\by_1,\by_2$ with
        finite means, a sub-$\sigma$-algebra $ \mathscr{G} \subset \mathscr{Y}$,
        if $ \by_1\leq\by_2$ a.s., then $\Exp\left[ \by_1  \middle\vert
        \mathscr{G} \right] \leq \Exp\left[ \by_2 \middle\vert \mathscr{G}
        \right]$ a.s.  ~\cite[Sec.  7.1, eq. (14, i and
        iii)]{ChowProbability1997}.
    \item[P2)] For bounded random variables $\by,\bz$ that are $
        \mathscr{Y},\mathscr{Z}$-measurable respectively, if $\vert \Exp\left[
        \by\bz \right] \vert \leq\infty$, then $ \by\Exp\left[ \bz
        \middle\vert \mathscr{Y} \right]=\Exp\left[ \by\bz \middle\vert
        \mathscr{Y} \right]$ a.s.~\cite[Sec. 7.1, Thm.  3]{ChowProbability1997}.
    \item[P3)] $ \Exp\left[ \Exp\left[ \by \middle\vert \mathscr{G}_2 \right]
        \middle\vert \mathscr{G}_1 \right]= \Exp\left[ \by \middle\vert
        \mathscr{G}_1 \right]$ a.s. if $ \mathscr{G}_1\subset \mathscr{G}_2
        \subset \sigma(\by)$ and $ \Exp\left[  \by \right]\leq
        \infty$~\cite[Sec.  7.1, eq.  (14, v)]{ChowProbability1997}.
\end{enumerate}

The characteristic function (CF) of a random vector $\by\in \mathbb{R}^{p}$ with
probability density function (PDF) $\psi_{\by}(\bar{z})$ is
\begin{align}
    \Psi_{\by}(\bar{\alpha})&\triangleq
    \Exp_{\by}\left[\mathrm{exp}\left({j\bar{\alpha}^\top\by}\right)\right] \nonumber \\
                       &=\int_{\mathbb{R}^p}e^{j\bar{\alpha}^\top\bar{z}}
    \psi_{\by}(\bar{z})d\bar{z}=
    \mathscr{F}\left\{\psi_{\by}(\cdot)\right\}(-\bar{\alpha})\label{eq:cfun_def}
\end{align}
where $ \mathscr{F}\{\cdot\}$ denotes the Fourier transformation operator and
$\bar{\alpha}\in \mathbb{R}^{p}$. Given a CF $\Psi_{\by}(\bar{\alpha})$, the
PDF can be computed as
\begin{align}
    \psi_{\by}(\bar{z})&=
    \mathscr{F}^{-1}\left\{\Psi_{\by}(\cdot)\right\}(-\bar{z})
    \nonumber \\
    &={\left(\frac{1}{2\pi}\right)}^p\int_{
\mathbb{R}^p}e^{-j\bar{\alpha}^\top\bar{z}}
\Psi_{\by}(\bar{\alpha})d\bar{\alpha}\label{eq:cfun_ift}
\end{align}
where $ \mathscr{F}^{-1}\{\cdot\}$ denotes the inverse Fourier transformation
operator and $d\bar{\alpha}$ is short for $d\alpha_1d\alpha_2\ldots d\alpha_p$.
Since PDFs are absolutely integrable, every PDF has a unique CF. See~\cite[Sec.
1]{SteinFourier1971},~\cite[Sec. 22.6]{CramerMathematical1961},~\cite[Sec.
7.2, 8.2]{GubnerProbability2006}, \cite[Sec. 2.1]{VinodHSCC2017} for more
details about CFs.

\subsection{Terminal stochastic reach-avoid analysis}
\label{sub:stochRA}

Consider the discrete-time stochastic LTI system,
\begin{align}
    \bx_{k+1}&=A\bx_k+B\bar{u}_k+\bw_k\label{eq:sys}
\end{align}
with state $\bx_k\in \mathcal{X}=\mathbb{R}^n$, input $\bar{u}_k\in
\mathcal{U}\subseteq \mathbb{R}^m$, disturbance $\bw_k\in \mathcal{W}\subseteq
\mathbb{R}^n$, and matrices $A,B$ assumed to be of appropriate dimensions.
We assume that $ \mathcal{U}$ is compact,
$\bw_k$ is absolutely continuous with a known PDF $\psi_{\bw}$, and the random
process $\bw[\cdot]$ is independent and identical distributed (IID).
Let $N$ be a finite time horizon. For any given sequence of (non-random) inputs
$\bar{u}[\cdot]$ and an initial condition $\bar{x}_0\in \mathcal{X}$, the state
$\bx_k$ is a random vector for all $k\in \mathbb{N}_{[1,N]}$ via \eqref{eq:sys}.  

The system \eqref{eq:sys} can be equivalently described by a Markov control
process 
with stochastic kernel that is a 
Borel-measurable function $Q: \mathscr{B}( \mathcal{X}) \times \mathcal{X}
\times \mathcal{U} \rightarrow [0,1]$, which assigns to each $\bar{x} \in \mathcal{X}$
and $\bar{u}\in \mathcal{U}$ a probability measure on the Borel space
$(\mathcal{X}, \mathscr{B}( \mathcal{X}))$. For \eqref{eq:sys},
\begin{align}
    Q(d\bar{y}\vert
    \bar{x},\bar{u})&=\psi_{\bw}(\bar{y}-A\bar{x}-B\bar{u})d\bar{y}.\label{eq:stoch_kernel}
\end{align}
We define a \emph{Markov policy} $\pi=(\mu_0,\mu_1,\ldots,\mu_{N-1})\in
\mathcal{M}$ as a sequence of universally measurable
maps $\mu[\cdot]:
\mathcal{X} \rightarrow \mathcal{U}$. 
The random vector $\bX=
[\bx^\top_1\ \bx^\top_2\ \ldots\ \bx^\top_N]^\top$, defined in 
$( \mathcal{X}^N, \mathscr{B}( \mathcal{X}^N),
\Prob_{\bX}^{\bar{x}_0,\pi})$~\cite{SummersAutomatica2010}, has 
probability measure $\Prob_{\bX}^{\bar{x}_0,\pi}$ defined using
$Q$~\cite[Prop. 7.45]{BertsekasSOC1978}.

Let $ \mathcal{S}, \mathcal{T} \in \mathscr{B}( \mathcal{X})$. 
Define the \emph{terminal time probability},
$\hat{r}_{\bar{x}_0}^\pi( \mathcal{S}, \mathcal{T})$, 
for known $\bar{x}_0$ and $\pi$, 
as the probability that the
execution with policy $\pi$ is inside the target set $
\mathcal{T}$ at time $N$ and stays within the safe set $
\mathcal{S}$ for all time up to $N$.
From~\cite{SummersAutomatica2010},
\begin{align}
    \hat{r}_{\bar{x}_0}^\pi( \mathcal{S}, \mathcal{T})&=
    \Prob_{\bX}^{\bar{x}_0,\pi}\left\{\bx_N\in \mathcal{T} \wedge\bx_k\in
    \mathcal{S}\ \forall k\in \mathbb{N}_{[0,N-1]}\right\}. \nonumber %
\end{align}

From~\cite[Def. 10]{SummersAutomatica2010}, a Markov policy $\pi^\ast$ is
a \emph{maximal reach-avoid policy in the terminal sense} if and only if it is
the optimal solution of Problem A, defined as 
\begin{align}
  \begin{array}{rl}
      \mbox{A}: & \hat{r}_{\bar{x}_0}^{\pi^\ast}( \mathcal{S},
\mathcal{T})=\sup_{\pi\in \mathcal{M}}\hat{r}_{\bar{x}_0}^\pi( \mathcal{S},
\mathcal{T})
  \end{array}\label{prob:A}
\end{align} 
The solution of Problem A is characterized via dynamic programming~\cite[Thm.
11]{SummersAutomatica2010}.  Define $\hat{V}_k^\ast: \mathcal{X} \rightarrow
[0,1],\ k\in \mathbb{N}_{[0,N]}$, by the backward recursion for $\bar{x}\in
\mathcal{X}$,
\begin{align}
    \hat{V}_N^\ast(\bar{x})&=1_{ \mathcal{T}}(\bar{x})\label{eq:VT} \\
    \hat{V}_k^\ast(\bar{x})&=\sup_{\bar{u}\in \mathcal{U}}1_{
\mathcal{S}}(\bar{x})\int_{ \mathcal{X}}
\hat{V}_{k+1}^\ast(\bar{y})Q(d\bar{y}\vert\bar{x},\bar{u}).\label{eq:Vt_recursionQ} 
\end{align}
Then, the optimal value to Problem A is $\hat{r}_{\bar{x}_0}^{\pi^\ast}(
\mathcal{S}, \mathcal{T})=\hat{V}_0^\ast(\bar{x}_0)$ for every $\bar{x}_0\in
\mathcal{X}$. 

\begin{lem}{\cite[Thm. 11]{SummersAutomatica2010}}\label{lem:exist}
    A sufficient condition for existence of a maximal Markov policy for
    Problem A is \begin{align}
        \mathcal{U}_k(\bar{x},\lambda)&=\{\bar{u}\in \mathcal{U}: \int_{ \mathcal{X}}
        \hat{V}_{k+1}^\ast(\bar{y})Q(\bar{y}\vert \bar{x},\bar{u})d\bar{y} \geq
    \lambda\}\label{eq:Usuff}
\end{align} 
and $\mathcal U_k$ is compact for all $\lambda \in \mathbb{R},\bar{x}\in \mathcal{X}$
and $k\in \mathbb{N}_{[0,N-1]}$.
\end{lem}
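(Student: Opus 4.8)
\proof
The plan is a backward induction on $k\in\mathbb{N}_{[0,N]}$ that, at each step, establishes simultaneously (i) that $\hat V_k^\ast$ is (universally) measurable and bounded in $[0,1]$, and (ii) that the supremum defining $\hat V_k^\ast$ in \eqref{eq:Vt_recursionQ} is attained by a measurable map $\mu_k^\ast:\mathcal{X}\to\mathcal{U}$. The policy $\pi^\ast=(\mu_0^\ast,\ldots,\mu_{N-1}^\ast)$ built from these maximizers is then a Markov policy in $\mathcal{M}$, and unwinding the recursion gives $\hat r_{\bar x_0}^{\pi^\ast}(\mathcal{S},\mathcal{T})=\hat V_0^\ast(\bar x_0)=\sup_{\pi\in\mathcal{M}}\hat r_{\bar x_0}^{\pi}(\mathcal{S},\mathcal{T})$, so $\pi^\ast$ solves Problem~A. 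The base case $k=N$ is immediate: $\hat V_N^\ast=1_{\mathcal{T}}$ is Borel-measurable since $\mathcal{T}\in\mathscr{B}(\mathcal{X})$, bounded in $[0,1]$, with no supremum to attain.

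For the inductive step, assume $\hat V_{k+1}^\ast$ is measurable and bounded, and let $W_k(\bar x,\bar u)=\int_{\mathcal{X}}\hat V_{k+1}^\ast(\bar y)\,Q(d\bar y\vert\bar x,\bar u)=\int_{\mathcal{X}}\hat V_{k+1}^\ast(\bar y)\,\psi_{\bw}(\bar y-A\bar x-B\bar u)\,d\bar y$ denote the inner term of \eqref{eq:Vt_recursionQ}. Since $Q$ is a Borel-measurable stochastic kernel and $\hat V_{k+1}^\ast$ is bounded and measurable, $W_k$ is measurable on $\mathcal{X}\times\mathcal{U}$ by the standard measurability properties of stochastic kernels (reduce to indicator integrands $1_B$ by a monotone-class argument, then pass to general $\hat V_{k+1}^\ast$ by bounded convergence). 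Now fix $\bar x$ and set $s(\bar x)=\sup_{\bar u\in\mathcal{U}}W_k(\bar x,\bar u)$. By the lemma's hypothesis every superlevel set $\mathcal{U}_k(\bar x,\lambda)$ is compact; hence the sets $\mathcal{U}_k(\bar x,s(\bar x)-1/n)$, $n\in\mathbb{N}$, are nonempty (definition of the supremum), compact, and nested decreasing, so their intersection is nonempty and any $\bar u^\ast$ in it satisfies $W_k(\bar x,\bar u^\ast)=s(\bar x)$. Thus the supremum is attained for each $\bar x$, and $\hat V_k^\ast(\bar x)=1_{\mathcal{S}}(\bar x)\,s(\bar x)\in[0,1]$, as needed.

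What remains is to propagate measurability through the $\sup$ and to select the maximizer measurably, which is precisely the content of the standard measurable-selection theorems for Borel-model stochastic control: for a measurable $W_k$ whose $\bar u$-superlevel sets are compact, $\bar x\mapsto s(\bar x)$ is universally measurable and there is a universally measurable selector $\mu_k^\ast$ with $W_k(\bar x,\mu_k^\ast(\bar x))=s(\bar x)$ for all $\bar x$ (via the Jankov--von Neumann theorem, or the lower-semianalytic machinery of~\cite{BertsekasSOC1978} that underlies~\cite[Thm.~11]{SummersAutomatica2010}). Taking $\mu_k^\ast$ as this selector closes the induction and produces $\pi^\ast$.

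I expect this last step --- measurability of the optimal value together with existence of a measurable maximizer --- to be the main obstacle, and it is exactly what the compactness hypothesis buys: without it the supremum need not be attained and $\bar x\mapsto s(\bar x)$ need not be measurable, so one could only assert an $\varepsilon$-optimal policy. Compactness of the superlevel sets --- equivalently, since $\mathcal{U}$ is compact, upper semicontinuity of $W_k(\bar x,\cdot)$ --- is what yields attainment and makes the selection theorems applicable; the rest (measurability of $W_k$, the $[0,1]$ bounds, and stitching the $\mu_k^\ast$ into $\pi^\ast$) is routine bookkeeping.
\endproof
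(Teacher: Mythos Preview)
The paper does not supply its own proof of this lemma: it is stated as a direct citation of \cite[Thm.~11]{SummersAutomatica2010} and left unproved, with the surrounding text only remarking that it guarantees universal measurability of $\hat V_k^\ast$ and of the maps $\mu_k^\ast$. There is therefore no in-paper argument to compare against.

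That said, your sketch is correct and is essentially the argument one finds in the cited source (and, behind it, in \cite{AbateAutomatica2008} and \cite{BertsekasSOC1978}): backward induction, with the compactness hypothesis on the superlevel sets $\mathcal{U}_k(\bar x,\lambda)$ used twice --- once via the nested-intersection trick to show the supremum in \eqref{eq:Vt_recursionQ} is attained for each $\bar x$, and once to invoke a measurable-selection theorem yielding a universally measurable maximizer $\mu_k^\ast$. Your identification of the selection step as the crux, and your appeal to the Jankov--von Neumann/lower-semianalytic machinery, is exactly right; the only care needed is that once the induction produces a merely universally measurable $\hat V_{k+1}^\ast$, the integrand $W_k$ is in general only universally (not Borel) measurable, so the selection theorem must be the universal-measurability version --- which is precisely what \cite{BertsekasSOC1978} provides and what the paper's commentary after the lemma alludes to.
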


Lemma~\ref{lem:exist} assures universal measurability of $\hat{V}_k^\ast(\cdot)$,
and that the Markov policy $\pi^\ast$ consists of universally measurable
maps $\mu^\ast_k$ ~\cite[Thm. 1 proof]{AbateAutomatica2008}. However, evaluating
(\ref{eq:Usuff}) is difficult.
We propose alternative sufficient conditions, which are %
easier to evaluate, and guarantee Borel-measurability (stronger than
universal measurability~\cite[Defn.  7.20]{BertsekasSOC1978}). 

\subsection{Open-loop stochastic reach-avoid analysis}

With $\bar{U} = [\bar{u}^\top_0\ \bar{u}^\top_1\ \ldots\ \bar{u}^\top_{N-1}]^\top \in
\mathcal{U}^N$ and $\bW = [\bw^\top_0\ \bw^\top_1\ \ldots\
\bw^\top_{N-1}]^\top\in \mathcal{W}^N$, we obtain
\begin{align}
    \bX&=\bar{A}\bar{x}_0+\bar{H}\bar{U}+\bar{G}\bW. \label{eq:sys_concat}
\end{align}
The matrices $\bar{A}, \bar{H}, \bar{G}$ are given by specific combinations
of the matrices $A$ and $B$ (see~\cite[Sec. 2]{SkafTAC2010}). 

Consider an open-loop policy $\rho: \mathcal{X} \rightarrow \mathcal{U}^N$ which
provides an open-loop sequence of inputs $\rho(\bar{x}_0)$ for every initial
condition $\bar{x}_0$. Then $\bX$, defined in \eqref{eq:sys_concat}
under the action of $\rho(\bar{x}_0)$, lies in the probability space $(
\mathcal{X}^N, \mathscr{B}( \mathcal{X}^N),
\Prob_{\bX}^{\bar{x}_0,\rho(\bar{x}_0)})$, with %
$\Prob_{\bX}^{\bar{x}_0,\rho(\bar{x}_0)}$ defined using $Q$~\cite[Prop.
7.45]{BertsekasSOC1978}. Note that $\rho(\bar{x}_0)\not\in \mathcal{M}$, since
universally measurable maps $\mu_k(\cdot)$ are functions of $\bx_k$, 
not $\bar{x}_0$. Consequently, a Markov policy with $\mu_k(\cdot)$ as constants
is a special case of $\rho(\cdot)$.

In~\cite{LesserCDC2013}, the authors approximate Problem A, without establishing the
direction of approximation, with Problem B,%
\begin{align}
  \begin{array}{rl}
      \mbox{B:} & \begin{array}{rcl}
      \mbox{maximize}& & \hat{r}_{\bar{x}_0}^{\rho(\bar{x}_0)}(\mathcal{S}, \mathcal{T}) \\
      \mbox{subject to}& & \left\{\begin{array}{rl}
        \bX&\sim \Prob_{\bX}^{\bar{x}_0,\rho(\bar{x}_0)}\\
        \rho(\bar{x}_0)&\in \mathcal{U}^N\\
    \end{array}\right.
  \end{array}\nonumber 
  \end{array}
\end{align}
with decision variable $\rho(\bar{x}_0)$, and
\begin{align}
 \hat{r}_{\bar{x}_0}^{\rho(\bar{x}_0)}(\mathcal{S},
\mathcal{T})&=
    \Prob_{\bX}^{\bar{x}_0,\rho(\bar{x}_0)}\left\{\bx_N\in \mathcal{T} \wedge\bx_k\in
    \mathcal{S}\ \forall k\in \mathbb{N}_{[0,N-1]}\right\}. \nonumber
\end{align}
The optimal solution to Problem B is $ \rho^\ast(\bar{x}_0)$. Since
$\rho(\bar{x}_0)\not\in \mathcal{M}$, the relation between Problems A and B,
apart from structural similarity, is not evident.  Problem B was solved
in~\cite{LesserCDC2013} approximately via particle filter and chance-constrained
optimization methods.

We first demonstrate that Problem B underapproximates Problem A, then use a
Fourier transform-based approach that enables an exact solution to Problem B.
\begin{prob} Characterize the sufficient conditions under which $
    \hat{V}_k^\ast(\cdot)$ and $\mu_k^\ast(\cdot)$ are
    Borel-measurable for the terminal time problem.\label{prob:Borel}
\end{prob}
\begin{prob} 
    Show that the terminal time 
    problem (Problem A) is
    underapproximated by the open-loop formulation (Problem
    B).\label{prob:underapprox}
\end{prob}
\begin{prob}
    a) Construct a scalable method for solving Problem B by characterizing the
    forward stochastic reach probability density for stochastic linear systems
    controlled by $\rho(\cdot)$ when $\bw$ has an arbitrary PDF. Additionally,
    b) formulate Problem B as a convex optimization problem when $\bw$ is
    Gaussian.\label{prob:FT}
\end{prob}

\section{Theoretical results}
\label{sec:theorRes}

\subsection{Sufficient conditions for Borel-measurability of $\hat{V}_k^\ast(\cdot)$}

\begin{defn}{~\cite[Defn. 7.12]{BertsekasSOC1978}}
A stochastic kernel $Q(\cdot\vert \bar{x},\bar{u})$ is continuous if for every
$(\bar{x},\bar{u}) \in \mathcal{X}\times \mathcal{U}$ and every sequence
$(\bar{x}_i,\bar{u}_i)\xrightarrow{ i \rightarrow \infty} (\bar{x},\bar{u})$, 
\begin{align}
    \lim_{i \rightarrow
\infty} Q(d\bar{y} \vert
(\bar{x}_i,\bar{u}_i))&=Q(d\bar{y} \vert (\bar{x},\bar{u})).
\end{align}
\end{defn}
\begin{lem}
   If the PDF of the disturbance $\bw$ $\psi_{\bw}$ is continuous, then
   $Q(\cdot\vert \bar{x},\bar{u})$ defined in \eqref{eq:stoch_kernel} is
   continuous.\label{lem:LTIstoch}
\end{lem}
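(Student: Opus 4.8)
The plan is to unpack Definition~1. The statement ``$\lim_{i\to\infty} Q(d\bar{y}\vert(\bar{x}_i,\bar{u}_i))=Q(d\bar{y}\vert(\bar{x},\bar{u}))$'' is to be read as convergence of the associated probability measures in the weak topology~\cite[Ch. 7]{BertsekasSOC1978}, i.e.\ $\int_{ \mathcal{X}} g(\bar{y})Q(d\bar{y}\vert\bar{x}_i,\bar{u}_i) \to \int_{ \mathcal{X}} g(\bar{y})Q(d\bar{y}\vert\bar{x},\bar{u})$ for every bounded continuous $g: \mathcal{X}\to \mathbb{R}$. So the task is to verify this convergence for the explicit kernel \eqref{eq:stoch_kernel}, using only that $\psi_{\bw}$ is continuous.

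First I would collapse the two-variable limit into a single translation. Since $(\bar{x},\bar{u})\mapsto A\bar{x}+B\bar{u}$ is continuous, $(\bar{x}_i,\bar{u}_i)\to(\bar{x},\bar{u})$ forces $\bar{c}_i\triangleq A\bar{x}_i+B\bar{u}_i\to \bar{c}\triangleq A\bar{x}+B\bar{u}$ in $\mathbb{R}^n$. By \eqref{eq:stoch_kernel} the density of $Q(\cdot\vert\bar{x}_i,\bar{u}_i)$ with respect to Lebesgue measure is $\bar{y}\mapsto\psi_{\bw}(\bar{y}-\bar{c}_i)$, and the assumed continuity of $\psi_{\bw}$ yields $\psi_{\bw}(\bar{y}-\bar{c}_i)\to\psi_{\bw}(\bar{y}-\bar{c})$ for every $\bar{y}\in\mathbb{R}^n$ --- this is the one place the hypothesis is used.

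Next I would upgrade this pointwise convergence of densities to $L^1$ convergence, hence to convergence in total variation (which implies the required weak convergence). Each $\psi_{\bw}(\cdot-\bar{c}_i)$ is nonnegative with $\int_{\mathbb{R}^n}\psi_{\bw}(\bar{y}-\bar{c}_i)d\bar{y}=1$ by translation invariance of Lebesgue measure, so Scheffé's lemma gives $\int_{\mathbb{R}^n}\vert\psi_{\bw}(\bar{y}-\bar{c}_i)-\psi_{\bw}(\bar{y}-\bar{c})\vert d\bar{y}\to 0$. Consequently, for any bounded measurable $g$ with $\vert g\vert\le M$,
\[
\left\vert \int g(\bar{y})Q(d\bar{y}\vert\bar{x}_i,\bar{u}_i)-\int g(\bar{y})Q(d\bar{y}\vert\bar{x},\bar{u})\right\vert\le M\int_{\mathbb{R}^n}\vert\psi_{\bw}(\bar{y}-\bar{c}_i)-\psi_{\bw}(\bar{y}-\bar{c})\vert d\bar{y}\to 0,
\]
which establishes the claim.

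The change of variables and the Scheffé (or dominated-convergence) estimate are routine; the only point requiring care is the interpretation of the limit in Definition~1 as weak convergence of probability measures. I note that an alternative argument avoids the continuity hypothesis altogether: substituting $\bar{y}=\bar{z}+\bar{c}_i$ recasts the test integral as $\int g(\bar{z}+\bar{c}_i)\psi_{\bw}(\bar{z})d\bar{z}$, to which dominated convergence applies directly via continuity of $g$ and the integrable dominator $M\psi_{\bw}$; I keep the version above because it uses the stated assumption and in fact delivers the stronger total-variation conclusion.
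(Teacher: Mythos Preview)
Your proof is correct. The paper's own argument is a one-line appeal to ``continuity is preserved by composition'': since $(\bar{x},\bar{u})\mapsto A\bar{x}+B\bar{u}$ is continuous and $\psi_{\bw}$ is continuous, the density $(\bar{y},\bar{x},\bar{u})\mapsto\psi_{\bw}(\bar{y}-A\bar{x}-B\bar{u})$ is continuous. That is exactly your first step. Where you go further is in closing the gap between pointwise continuity of the density and weak continuity of the kernel in the sense of Definition~1; the paper leaves this passage implicit, whereas you make it explicit via Scheff\'e's lemma and in fact obtain the stronger conclusion of continuity in total variation. So the core idea is the same, but your write-up is more complete, and your closing remark that the change-of-variables plus dominated convergence route would even dispense with continuity of $\psi_{\bw}$ is a worthwhile observation that the paper does not make.
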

Lemma~\ref{lem:LTIstoch} follows from the fact that continuity is preserved by
composition~\cite[Cor. 13.1.7]{TaoAnalysisII}. We have the following
theorem, similar to~\cite[Prop. 3]{DingAutomatica2013}.
\begin{thm}\label{thm:Borel}
    If $ \mathcal{U}$ is compact and $Q(\cdot\vert \bar{x}, \bar{u})$ is
    continuous, then $\hat{V}_k^\ast(\cdot)$ are Borel-measurable functions for
    $k\in \mathbb{N}_{[0,N]}$ and $\pi^\ast$, comprised of Borel-measurable maps
    $\mu_k^\ast(\cdot)$, exists.
\end{thm}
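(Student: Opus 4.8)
The plan is to prove Theorem~\ref{thm:Borel} by backward induction on $k$, establishing simultaneously that each $\hat V_k^\ast(\cdot)$ is Borel-measurable and that the backward recursion \eqref{eq:Vt_recursionQ} admits a Borel-measurable maximizing selector $\mu_k^\ast(\cdot)$. The base case $k=N$ is immediate from \eqref{eq:VT}, since $\mathcal{T}\in\mathscr{B}(\mathcal{X})$ implies $1_{\mathcal{T}}(\cdot)$ is Borel-measurable. For the inductive step, I would assume $\hat V_{k+1}^\ast(\cdot)$ is Borel-measurable and bounded (it maps into $[0,1]$) and analyze the map
\begin{align}
    W_k(\bar x,\bar u)&=\int_{\mathcal{X}}\hat V_{k+1}^\ast(\bar y)\,Q(d\bar y\mid\bar x,\bar u).\nonumber
\end{align}

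The key step is to show $W_k$ is continuous in $(\bar x,\bar u)$, hence jointly Borel-measurable. Here I would invoke the hypothesis that $Q(\cdot\mid\bar x,\bar u)$ is a continuous stochastic kernel: by \cite[Prop.~7.30]{BertsekasSOC1978} (or the analogous statement on weak convergence), if $(\bar x_i,\bar u_i)\to(\bar x,\bar u)$ then $Q(d\bar y\mid\bar x_i,\bar u_i)\Rightarrow Q(d\bar y\mid\bar x,\bar u)$ weakly, and integrating a bounded Borel-measurable integrand against weakly convergent measures would require the integrand to be continuous. Since $\hat V_{k+1}^\ast$ is generally only Borel-measurable, not continuous, I expect to instead use the specific structure \eqref{eq:stoch_kernel}: $W_k(\bar x,\bar u)=\int_{\mathcal{X}}\hat V_{k+1}^\ast(\bar y)\,\psi_{\bw}(\bar y-A\bar x-B\bar u)\,d\bar y$, change variables to $\bar z=\bar y-A\bar x-B\bar u$, and obtain a convolution-type integral $\int \hat V_{k+1}^\ast(\bar z+A\bar x+B\bar u)\psi_{\bw}(\bar z)\,d\bar z$; continuity in $(\bar x,\bar u)$ then follows from the dominated convergence theorem, using that $\psi_{\bw}$ is integrable and $\hat V_{k+1}^\ast$ is bounded — this is essentially the standard argument that convolution with an $L^1$ kernel produces a continuous function, and it does not need continuity of $\hat V_{k+1}^\ast$. (If the paper's notion of ``continuous kernel'' is only weak continuity, I would note that Lemma~\ref{lem:LTIstoch}'s hypothesis that $\psi_{\bw}$ is continuous can instead be fed directly into this argument.)

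Once $W_k(\bar x,\bar u)$ is shown continuous, multiplying by $1_{\mathcal{S}}(\bar x)$ (Borel-measurable since $\mathcal{S}\in\mathscr{B}(\mathcal{X})$) keeps the integrand jointly Borel-measurable, and the recursion reads $\hat V_k^\ast(\bar x)=\sup_{\bar u\in\mathcal{U}}1_{\mathcal{S}}(\bar x)\,W_k(\bar x,\bar u)$. Because $\mathcal{U}$ is compact and $\bar u\mapsto W_k(\bar x,\bar u)$ is continuous, the supremum is attained and, by a measurable-maximum theorem (e.g.\ \cite[Prop.~7.33]{BertsekasSOC1978} on the existence of Borel-measurable selectors for the $\arg\max$ of a Carath\'eodory function over a compact action set), there exists a Borel-measurable $\mu_k^\ast:\mathcal{X}\to\mathcal{U}$ with $\hat V_k^\ast(\bar x)=1_{\mathcal{S}}(\bar x)\,W_k(\bar x,\mu_k^\ast(\bar x))$, and $\hat V_k^\ast(\cdot)$ is itself Borel-measurable as the composition/product of Borel-measurable maps. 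It also maps into $[0,1]$, closing the induction. Collecting the selectors $\mu_0^\ast,\dots,\mu_{N-1}^\ast$ gives the policy $\pi^\ast$, and $\hat V_0^\ast(\bar x_0)=\hat r_{\bar x_0}^{\pi^\ast}(\mathcal{S},\mathcal{T})$ as in \cite[Thm.~11]{SummersAutomatica2010}.

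\textbf{Main obstacle.} The delicate point is the continuity of $W_k$ in $(\bar x,\bar u)$ when $\hat V_{k+1}^\ast$ is merely Borel-measurable: the naive weak-convergence argument fails, so the proof must lean on the translation structure of \eqref{eq:stoch_kernel} together with either continuity or (with a little more care) just integrability of $\psi_{\bw}$. I expect this to be where the hypotheses of Lemma~\ref{lem:LTIstoch} and Theorem~\ref{thm:Borel} are actually used in an essential way; the measurable-selection step is then a black-box citation.
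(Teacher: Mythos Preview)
Your proposal is correct and follows essentially the same backward-induction scheme as the paper: show that $W_k(\bar x,\bar u)=\int\hat V_{k+1}^\ast(\bar y)\,Q(d\bar y\mid\bar x,\bar u)$ is continuous on $\mathcal X\times\mathcal U$, then invoke a measurable-selection theorem over the compact action set $\mathcal U$ to obtain a Borel $\mu_k^\ast$ and conclude Borel-measurability of $\hat V_k^\ast$ via the product with $1_{\mathcal S}$. The paper differs only in packaging: it black-boxes the continuity of $W_k$ into \cite[Fact~3.9]{NowakAP1985} and the selector into \cite[Thm.~2]{HimmelbergMOR1976}, whereas you unpack the continuity step through the density structure \eqref{eq:stoch_kernel} and cite \cite[Prop.~7.33]{BertsekasSOC1978} for selection. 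Your observation that mere weak continuity of the kernel is insufficient (because $\hat V_{k+1}^\ast$ need not be continuous) is exactly the point that the density argument---or equivalently Nowak's result---resolves.

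One small wording issue: after your change of variables, dominated convergence would require pointwise convergence of $\hat V_{k+1}^\ast(\bar z+A\bar x+B\bar u)$, which fails for general Borel $\hat V_{k+1}^\ast$. The clean version is the one you allude to immediately afterward: $L^1$-continuity of translations of $\psi_{\bw}$ gives $\|\psi_{\bw}(\cdot-c_i)-\psi_{\bw}(\cdot-c)\|_{L^1}\to 0$, and pairing with the bounded $\hat V_{k+1}^\ast$ yields continuity of $W_k$ (equivalently, $L^\infty\ast L^1$ is continuous). This does not even need continuity of $\psi_{\bw}$, so your argument is in fact slightly more general than the route via Lemma~\ref{lem:LTIstoch}.
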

\begin{proof} (By induction)
    Since $\mathcal{S}, \mathcal{T}$ are Borel sets, $1_{\mathcal{S}}(\cdot)$
    and $1_{\mathcal{T}}(\cdot)$ are Borel-measurable functions, and the result for
    $k=N$ follows trivially. Consider the base case $k=N-1$. Since
    $\hat{V}_{N}^\ast(\cdot)$ is a bounded Borel-measurable function and
    $Q(\cdot\vert \bar{x},\bar{u})$ is a Borel-measurable function, continuous
    over $ \mathcal{X}\times \mathcal{U}$, $\int_{ \mathcal{X}}
    \hat{V}_{N}^\ast(\bar{y})Q(d\bar{y}\vert\bar{x},\bar{u})$ is continuous over
    $ \mathcal{X}\times \mathcal{U}$~\cite[Fact 3.9]{NowakAP1985}. Since
    continuity implies upper semi-continuity~\cite[Lem.
    7.13 (b)]{BertsekasSOC1978} and Borel-measurablity~\cite[Sec.
    1.4]{ChowProbability1997}, and $ \mathcal{U}$ is compact, an optimal
    Borel-measurable input map $\mu_{N-1}^\ast(\cdot)$ exists
    and $\int_{ \mathcal{X}}
    \hat{V}_{N}^\ast(\bar{y})Q(d\bar{y}\vert\bar{x},\mu_{N-1}^\ast(\bar{x}))$
    is Borel-measurable over $ \mathcal{X}$~\cite[Thm. 2]{HimmelbergMOR1976}.
    Finally, $\hat{V}^\ast_{N-1}(\cdot)$ is Borel-measurable since the product operator
    preserves Borel-measurability~\cite[Cor. 18.5.6]{TaoAnalysisII}.
    For the case $k=t$, assume for induction that $\hat{V}_{t+1}^\ast(\cdot)$
    is Borel-measurable.  By the same arguments as above, a
    Borel-measurable $\mu_t^\ast(\cdot)$ exists and $\hat{V}_{t}^\ast(\cdot)$ is
    Borel-measurable, completing the proof.
\end{proof}

Theorem~\ref{thm:Borel}
addresses Problem~\ref{prob:Borel}. Since Borel-measurability implies universal
measurability~\cite[Defn. 7.20]{BertsekasSOC1978}, 
the hypotheses of Theorem~\ref{thm:Borel} is stricter than
Lemma~\ref{lem:exist},
but can be easily checked, and  
implies that $\hat{V}_k^\ast(\bx_k)$ is a
$\mathscr{B}([0,1])$-measurable random
variable $\forall k\in \mathbb{N}_{[0,N]}$. 
The continuity requirements in Theorem~\ref{thm:Borel} and
Lemma~\ref{lem:LTIstoch} may be weakened to include exponential
densities~\cite[Sec.  8.3]{BertsekasSOC1978}.

\subsection{Problem B underapproximates Problem A}
\label{sub:consv}

Next, we address Problem~\ref{prob:underapprox}. For $\bx_k=\bar{x}$, denote the
expectation defined by $Q(\cdot\vert\bar{x},\bar{u})$ as $
\Exp_{\bx}^{\bar{u}}$. Under the conditions proposed by Theorem~\ref{thm:Borel},
we know that $\hat{V}_{k+1}^\ast(\bx_{k+1})$ is a Borel-measurable random
variable for all $k\in \mathbb{N}_{[0,N]}$. From \eqref{eq:Vt_recursionQ}, for
any $k\in \mathbb{N}_{[0,N-1]}$, we have \emph{almost surely} (a.s.)\footnote{
The a.s. equality arises because the conditional
expectation of $\hat{V}_{k+1}^\ast(\bx_{k+1})$ is defined only within an
equivalence (can differ in sets of zero probability measure)~\cite[Ch.
7]{ChowProbability1997}.}
\begin{align}
    \hat{V}_k^\ast(\bar{x})&=\sup_{\bar{u}\in \mathcal{U}}1_{
\mathcal{S}}(\bar{x}) \Exp_{\bx}^{\bar{u}}\left[\hat{V}_{k+1}^\ast(\bx_{k+1})
\middle\vert \bx_k=\bar{x} \right]. \label{eq:Vt_condexp}
\end{align}
Using Theorem~\ref{thm:Borel} and properties of conditional
expectations, we can show the following theorem.  See~\cite{VinodArxiv2017LCSS}
for the proof.

Note that the state $\bx_k$ is not an independent random vector, but part of a
Markov control process controlled by a sequence of actions.  Therefore, the
conditional expectation in \eqref{eq:Vt_condexp} is defined on the
$\sigma$-algebra $\sigma(\overline{\bx_k})=\sigma(\bx_0,\bx_1,\ldots,\bx_k)$.
For any $k\in \mathbb{N}_{[0,N-1]}$, we have from \eqref{eq:Vt_condexp}
\begin{align}
 \hat{V}_k^\ast(\bx_k)&=\sup_{\bar{u}\in \mathcal{U}}1_{
\mathcal{S}}(\bx_k) \Exp_{\bx}^{\bar{u}}\left[\hat{V}_{k+1}^\ast(\bx_{k+1})
\middle\vert \sigma(\overline{\bx_k}) \right]\mbox{a.s. }. \label{eq:Vt_rv}
\end{align}
Also, by the definition of a stochastic process~\cite[Sec. 5.3]{ChowProbability1997},
\begin{align}
    \sigma(\overline{\bx_k})\subset\sigma(\overline{\bx_{k+1}})\ \forall k\in
    \mathbb{N}_{[0,N-1]}.\label{eq:subalg}
\end{align}
\begin{thm}\label{thm:consv}
    If $ \mathcal{U}$ is compact and $Q(\cdot\vert \bar{x}, \bar{u})$ is
    continuous, 
    then $\hat{r}_{\bar{x}_0}^{\rho^\ast(\bar{x}_0)}(
    \mathcal{S}, \mathcal{T})\leq \hat{r}_{\bar{x}_0}^{\pi^\ast}( \mathcal{S},
    \mathcal{T})\mbox{ a.s. in }\bar{x}_0\in \mathcal{X}.$
\end{thm}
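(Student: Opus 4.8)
The plan is to establish the pointwise bound $\hat{r}_{\bar{x}_0}^{\bar{U}}(\mathcal{S},\mathcal{T})\leq\hat{V}_0^\ast(\bar{x}_0)$ for \emph{every} fixed input sequence $\bar{U}=[\bar{u}_0^\top\ \ldots\ \bar{u}_{N-1}^\top]^\top\in\mathcal{U}^N$, where $\hat{r}_{\bar{x}_0}^{\bar{U}}(\mathcal{S},\mathcal{T})$ is the terminal reach-avoid probability of the trajectory \eqref{eq:sys_concat} driven by the constants $\bar{U}$, and then to take the supremum over $\bar{U}$; since $\rho^\ast(\bar{x}_0)\in\mathcal{U}^N$ is optimal for Problem~B and $\hat{V}_0^\ast(\bar{x}_0)=\hat{r}_{\bar{x}_0}^{\pi^\ast}(\mathcal{S},\mathcal{T})$, this yields the claim. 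Under the hypotheses of Theorem~\ref{thm:Borel} every $\hat{V}_k^\ast(\cdot)$ is bounded and Borel-measurable, so $\hat{V}_k^\ast(\bx_k)$ is an integrable random variable and every conditional expectation below is well defined.

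Write $\hat{r}_{\bar{x}_0}^{\bar{U}}(\mathcal{S},\mathcal{T})=\Exp[R_0]$, where $R_t\triangleq 1_{\mathcal{T}}(\bx_N)\prod_{k=t}^{N-1}1_{\mathcal{S}}(\bx_k)$ for $t\in\mathbb{N}_{[0,N]}$ (with the empty product equal to $1$, so $R_N=1_{\mathcal{T}}(\bx_N)=\hat{V}_N^\ast(\bx_N)$), so that $R_t=1_{\mathcal{S}}(\bx_t)\,R_{t+1}$ for $t\leq N-1$. I would prove by backward induction on $t$ that
\[
\Exp\bigl[R_t\,\big\vert\,\sigma(\overline{\bx_t})\bigr]\ \leq\ \hat{V}_t^\ast(\bx_t)\qquad\text{a.s.}
\]
The base case $t=N$ holds because $R_N$ is $\sigma(\overline{\bx_N})$-measurable, so $\Exp[R_N\vert\sigma(\overline{\bx_N})]=R_N=\hat{V}_N^\ast(\bx_N)$. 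For the step from $t+1$ to $t$: first pull the $\sigma(\overline{\bx_t})$-measurable bounded factor $1_{\mathcal{S}}(\bx_t)$ out of the conditional expectation by P2, giving $\Exp[R_t\vert\sigma(\overline{\bx_t})]=1_{\mathcal{S}}(\bx_t)\,\Exp[R_{t+1}\vert\sigma(\overline{\bx_t})]$ a.s.; next, condition down one level using the tower property (P3) and the inclusion $\sigma(\overline{\bx_t})\subset\sigma(\overline{\bx_{t+1}})$ of \eqref{eq:subalg}, and apply the induction hypothesis together with monotonicity P1 to get $\Exp[R_{t+1}\vert\sigma(\overline{\bx_t})]\leq\Exp[\hat{V}_{t+1}^\ast(\bx_{t+1})\vert\sigma(\overline{\bx_t})]$ a.s.; then invoke the Markov structure of the process (the kernel \eqref{eq:stoch_kernel} under the constant input $\bar{u}_t$ depends on the past only through $\bx_t$) to identify $\Exp[\hat{V}_{t+1}^\ast(\bx_{t+1})\vert\sigma(\overline{\bx_t})]=\int_{\mathcal{X}}\hat{V}_{t+1}^\ast(\bar{y})\,Q(d\bar{y}\vert\bx_t,\bar{u}_t)$ a.s.; finally, since $\bar{u}_t\in\mathcal{U}$ and $1_{\mathcal{S}}\geq 0$, bound $1_{\mathcal{S}}(\bx_t)\int_{\mathcal{X}}\hat{V}_{t+1}^\ast(\bar{y})\,Q(d\bar{y}\vert\bx_t,\bar{u}_t)\leq\sup_{\bar{u}\in\mathcal{U}}1_{\mathcal{S}}(\bx_t)\int_{\mathcal{X}}\hat{V}_{t+1}^\ast(\bar{y})\,Q(d\bar{y}\vert\bx_t,\bar{u})=\hat{V}_t^\ast(\bx_t)$ using \eqref{eq:Vt_recursionQ}. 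Evaluating the bound at $t=0$, where $\bx_0=\bar{x}_0$ is deterministic and hence $\Exp[R_0\vert\sigma(\bx_0)]=\Exp[R_0]=\hat{r}_{\bar{x}_0}^{\bar{U}}(\mathcal{S},\mathcal{T})$, gives $\hat{r}_{\bar{x}_0}^{\bar{U}}(\mathcal{S},\mathcal{T})\leq\hat{V}_0^\ast(\bar{x}_0)$ a.s.\ in $\bar{x}_0$.

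Taking $\sup_{\bar{U}\in\mathcal{U}^N}$ of both sides and specializing to $\bar{U}=\rho^\ast(\bar{x}_0)$ then gives $\hat{r}_{\bar{x}_0}^{\rho^\ast(\bar{x}_0)}(\mathcal{S},\mathcal{T})\leq\hat{V}_0^\ast(\bar{x}_0)=\hat{r}_{\bar{x}_0}^{\pi^\ast}(\mathcal{S},\mathcal{T})$, with the a.s.\ qualifier inherited from the a.s.\ identities for conditional expectations invoked at each step (cf.\ the footnote after \eqref{eq:Vt_rv}). The step I expect to be the main obstacle is the identification of $\Exp[\hat{V}_{t+1}^\ast(\bx_{t+1})\vert\sigma(\overline{\bx_t})]$ with the kernel integral, since this is the only point where the \emph{Markov} property of the process---and not merely measurability of $\hat{V}_{t+1}^\ast$---is used, and it must be derived carefully from \eqref{eq:sys_concat} and \eqref{eq:stoch_kernel} under the fixed open-loop inputs. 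A secondary subtlety is that the conditioning must be carried on the filtration $\sigma(\overline{\bx_t})$ rather than on $\sigma(\bx_t)$, because only the former is increasing in $t$ (eq.~\eqref{eq:subalg}), which is what legitimizes the tower-property step; the Markov property is then what lets the conditional expectation still collapse to a function of $\bx_t$ alone. Finally, one checks that the finitely many null sets accumulated over the recursion still have probability zero, which is immediate as $N$ is finite.
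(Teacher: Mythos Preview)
Your proof is correct and uses essentially the same ingredients as the paper's proof---properties P1, P2, P3, the filtration inclusion \eqref{eq:subalg}, the Markov property of the open-loop process, and backward induction comparing against $\hat{V}_k^\ast$. The only organizational difference is that the paper introduces auxiliary open-loop value functions $\hat{W}_k(\bar{x},U_k^{N-1})$ and splits the argument into two separate claims (S1: $\hat{W}_k\leq\hat{V}_k^\ast$ by backward induction; S2: $\hat{W}_0(\bar{x}_0,\rho(\bar{x}_0))=\hat{r}_{\bar{x}_0}^{\rho(\bar{x}_0)}(\mathcal{S},\mathcal{T})$ by forward unrolling with P2 and P3), whereas you work directly with the tail random variable $R_t$ and fold both steps into a single backward induction, which is slightly more streamlined.
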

\begin{proof}
    For notational brevity, given $a\in \mathbb{N}_{[0,N]},\ b\in \mathbb{N}_{[0,N-1]}$, we
    define $U_a^{b} \deff %
    (\bar{u}_a,\bar{u}_{a+1},\ldots,\bar{u}_b)$ for some $U\in \mathcal{U}^N$
    with $U_0^{N-1}=U$ and $U_N^{N-1}$ as empty. We will later use a similar definition for $\bX$. 

    For $k\in \mathbb{N}_{[0,N-1]}$, define $\hat{W}_k: \mathcal{X}\times
    \mathcal{U}^{N-k} \rightarrow [0,1]$ based on \eqref{eq:Vt_condexp}, (a.s.)
    \begin{align}
        \hat{W}_{k}(\bar{x},U_k^{N-1})&\triangleq1_{ \mathcal{S}}(\bar{x})
        \Exp_{\bx}^{\bar{u}_{k}}\left[ \hat{W}_{k+1}(\bx_{k+1},U_{k+1}^{N-1}) \middle\vert
        \bx_{k}=\bar{x} \right] \nonumber\\
        \hat{W}_{N}(\bar{x},U_N^{N-1})&\triangleq1_{ \mathcal{T}}(\bar{x}). \nonumber
    \end{align}
    We see that $\hat{W}_k(\cdot)$ are Borel measurable by a straight-forward
    proof by induction.
    Next, we prove: 
    \begin{align}
        \begin{array}{ll}
        \mbox{S1:}& \hat{W}_k(\bar{x},U_k^{N-1})\leq \hat{V}_k^\ast(\bar{x})\mbox{ a.s. in
    }\bar{x}\in \mathcal{X}, \forall k\in \mathbb{N}_{[0,N-1]} \nonumber \\
        \mbox{S2:}& \sup_{\rho(\bar{x}_0)\in \mathcal{U}^N} \hat{W}_0(\bar{x}_0,\rho(\bar{x}_0))=\hat{r}_{\bar{x}_0}^{\rho^\ast(\bar{x}_0)}(
    \mathcal{S}, \mathcal{T})\mbox{ a.s. in }\bar{x}_0. \nonumber
        \end{array}
    \end{align}
    S1 implies $\sup_{\rho(\bar{x}_0)\in \mathcal{U}^N} \hat{W}_0(\bar{x}_0,\rho(\bar{x}_0))\leq
    \hat{V}_0^\ast(\bar{x}_0)$ a.s. in
    $\bar{x}_0\in \mathcal{X}$. Thus, the proof is complete via S2 and the fact
    that $\hat{V}_0^\ast(\bar{x}_0)=\hat{r}_{\bar{x}_0}^{\pi^\ast}( \mathcal{S},
    \mathcal{T})$ for every $\bar{x}_0\in \mathcal{X}$.

    \emph{Proof of S1}: (By induction) The base case is $k=N-1$.
    Since $U_{N-1}^{N-1}=\bar{u}_{N-1}$,
    \begin{align}
        \hat{W}_{N-1}(\bar{x},\bar{u}_{N-1})&\triangleq1_{ \mathcal{S}}(\bar{x})
        \Exp_{\bx}^{\bar{u}_{N-1}}\left[ 1_{ \mathcal{T}} (\bx_N) \middle\vert
        \bx_{N-1}=\bar{x} \right]\mbox{a.s. }. \nonumber
    \end{align}
    From \eqref{eq:Vt_condexp} and \eqref{eq:VT}, we have (a.s.)
    \begin{align}
        \hat{W}_{N-1}(\bar{x},\bar{u}_{N-1})&\leq \sup_{\bar{u}_{N-1}\in
        \mathcal{U}}\hat{W}_{N-1}(\bar{x},\bar{u}_{N-1})=
        \hat{V}_{N-1}^\ast(\bar{x}). \nonumber
    \end{align}
    For the case $k=t$, assume for induction that
    $\hat{W}_{t+1}(\bar{x},U_{t+1}^{N-1})\leq \hat{V}_{t+1}^\ast(\bar{x})$
    a.s.  in $\bar{x}\in \mathcal{X}$. We have to show
    $\hat{W}_{t}(\bar{y},U_{t}^{N-1})\leq \hat{V}_{t}^\ast(\bar{y})$ a.s.\ in
    $\bar{y}\in \mathcal{X}$. By Property P1 and \eqref{eq:Vt_rv}, we have
    (a.s.)
    \begin{align}
        \hat{W}_{t}(\bar{y},U_{t}^{N-1}) 
                          &\leq \sup_{\bar{u}_{t}\in
        \mathcal{U}}\hat{W}_{t}(\bar{y},U_{t}^{N-1}) \nonumber \\
        &\leq \sup_{\bar{u}_{t}\in
        \mathcal{U}}1_{ \mathcal{S}}(\bar{y})
        \Exp_{\bx}^{\bar{u}_{t}}\left[ \hat{W}_{t+1}(\bx_{t+1},U_{t+1}^{N-1})\middle\vert
            \bx_{t}=\bar{y} \right] \nonumber \\
        &\leq \sup_{\bar{u}_{t}\in
        \mathcal{U}}1_{ \mathcal{S}}(\bar{y})
        \Exp_{\bx}^{\bar{u}_{t}}\left[ \hat{V}^\ast_{t+1}(\bx_{t+1})\middle\vert
            \bx_{t}=\bar{y} \right] \nonumber \\
        &= \hat{V}_{t}^\ast(\bar{y}). \nonumber
    \end{align}
    This completes the proof of S1.

    \emph{Proof of S2}: 
    We use the definition of $\hat{W}_k(\cdot)$ based on $\sigma$-algebra
    (similar to \eqref{eq:Vt_rv}).  By~\eqref{eq:subalg}, $1_\mathcal{S}(\bx_k)$
    is a $\sigma(\overline{\bx_t})$-measurable random variable for $t\in
    \mathbb{N}_{[k,N]}$. Expanding $\hat{W}_0(\cdot)$ and $\hat{W}_1(\cdot)$ and
    using Property P2, we have (a.s.)
    \begin{align}
        \hat{W}_0(\bx_{0},U)&=1_\mathcal{S}(\bx_0)\Exp_{\bx}^{\bar{u}_{0}}\left[ \hat{W}_1
        (\bx_1,U_{1}^{N-1}) \middle\vert
        \sigma(\bx_{0}) \right] \nonumber \\
            &=\Exp_{\bx}^{\bar{u}_{0}}\left[1_\mathcal{S}(\bx_0) \hat{W}_1
        (\bx_1,U_{1}^{N-1}) \middle\vert
        \sigma(\bx_{0}) \right] \nonumber \\ %
            &=\Exp_{\bx}^{\bar{u}_{0}}\Big[1_\mathcal{S}(\bx_0)
        1_\mathcal{S}(\bx_1) \nonumber \\
    &\quad\quad\left.\Exp_{\bx}^{\bar{u}_{1}}\left[ \hat{W}_2 (\bx_2,U_{2}^{N-1})
\middle\vert \sigma(\overline{\bx_{1}}) \right] \middle\vert \sigma(\bx_{0})
\right] \nonumber \\% \label{eq:rec2} \\
&=\Exp_{\bx}^{\bar{u}_{0}}\Big[\Exp_{\bx}^{\bar{u}_{1}}\Big[ 1_\mathcal{S}(\bx_0)
    1_\mathcal{S}(\bx_1) \nonumber \\
    &\quad\quad\left.\left.\hat{W}_2 (\bx_2,U_{2}^{N-1})
\middle\vert \sigma(\overline{\bx_{1}}) \right] \middle\vert \sigma(\bx_{0})
\right] \nonumber %
    \end{align}
    From Property P3 and~\eqref{eq:subalg}, we have (a.s.)
    \begin{align}
        \hat{W}_0(\bx_{0},U)=\Exp_{\bX_1^2}^{U_0^1}\left[\left(\prod_{k=0}^{1} 1_{ \mathcal{S}}(\bx_k)\right)\hat{W}_2 (\bx_2,U_{2}^{N-1})
        \middle\vert \sigma(\bx_{0}) \right] \nonumber
    \end{align}
    We repeatedly expand $\hat{W}_k(\cdot)$ for $k\in \mathbb{N}_{[2,N]}$ and
    apply the arguments presented above to obtain
    \begin{align}
        \hat{W}_0(\bx_{0},U)= \Exp_{\bX_1^N}^{U_0^{N-1}}&\left[\left(\prod_{k=0}^{N-1} 1_{ \mathcal{S}}(\bx_k)\right) 1_{
\mathcal{T}}(\bx_N) \middle\vert \sigma(\bx_0)\right]\mbox{ a.s.} \nonumber
    \end{align}
    By definition of $\Prob_{\bX}^{\bar{x}_0,\rho(\bar{x}_0)}$,
    ($\bX_1^N=\bX,U_0^{N-1}=U\triangleq\rho(\bar{x}_0)$)
    \begin{align}
        \hat{W}_0(\bar{x}_{0},\rho(\bar{x}_0))=\hat{r}_{\bar{x}_0}^{\rho(\bar{x}_0)}(
\mathcal{S}, \mathcal{T})\mbox{ a.s.}, \nonumber
    \end{align}
    and the definition of $\rho^\ast(\bar{x}_0)$ completes the proof of S2.
\end{proof}

We denote the optimal value of Problem B as $\hat{W}_0^\ast(\bar{x}_0)$. 

\section{Under-approximation via Fourier transforms}
\label{sec:solveProbB}

\subsection{FTBU using an analytical expression for $\hat{r}_{\bar{x}_0}^{\rho(\bar{x}_0)}(\mathcal{S}, \mathcal{T})$}
\label{sub:an_analytical_expression_for_bx_cdot_}

Let the PDF of the random vector $\bX$ parameterized by the initial condition
$\bar{x}_0$ and the input vector $\bar{U}$ be $\psi_{\bX}(\bar{X};\bar{x}_0,\bar{U})$.
The objective of Problem B
$\hat{r}_{\bar{x}_0}^{\rho(\bar{x}_0)}(\mathcal{S}, \mathcal{T})$ is
\begin{align}
\hat{r}_{\bar{x}_0}^{\rho(\bar{x}_0)}(\mathcal{S}, \mathcal{T})
    &=\int_{
\mathcal{T}}\underbrace{\int_{ \mathcal{S}}\ldots\int_{
\mathcal{S}}}_{\text{$N-1$ times}}\psi_{\bX}(\bar{X};\bar{x}_0,\rho(\bar{x}_0))d\bar{X}\label{eq:obj_probB}
\end{align}
where $\bar{X}={[\bar{x}_1^\top\ \bar{x}_2^\top\ \ldots\
\bar{x}_{N}^\top]}^\top\in \mathcal{X}^N$, $\bar{x}_k\in \mathcal{X}\ \forall
k\in \mathbb{N}_{[1,N]}$, and $d\bar{X}$ is short for
$d\bar{x}_1d\bar{x}_2\ldots d\bar{x}_N$. Therefore, if
$\psi_{\bX}$ is known, then  $\hat{r}_{\bar{x}_0}^{\rho(\bar{x}_0)}(\mathcal{S},
\mathcal{T})$ is a $nN$-dimensional integral of a PDF $\psi_{\bX}$ over
$ \mathcal{S}\times \mathcal{S} \times \ldots \times \mathcal{T}$.
Determining $\psi_{\bX}$ for a known $\bar{U}$ can be posed as a forward
stochastic reachability problem %
using the CF of $\bW$~\cite[Prop. P3]{VinodHSCC2017} defined as
\begin{align}
    \Psi_{\bW}(\bar{\alpha})&=\prod_{k=0}^{N-1}\Psi_{\bw}(\bar{\alpha}_k)\label{eq:cfun_W}
\end{align}
where $\bar{\alpha}={[\bar{\alpha}_0^\top\ \bar{\alpha}_1^\top\ \ldots\
\bar{\alpha}_{N-1}^\top]}^\top\in \mathbb{R}^{(nN)}$ and $\bar{\alpha}_k\in
\mathbb{R}^n$ for all $k\in \mathbb{N}_{[0,N-1]}$.  
We compute $\psi_{\bX}$ via Proposition~\ref{prop:FSRPD_U}.

\begin{prop}
    For initial state $\bar{x}_0\in \mathcal{X}$, dynamics as in
    \eqref{eq:sys_concat}, and open-loop control vector $\bar{U}$, the PDF and
    CF of $\bX$ are
\begin{align}
    \Psi_{\bX}(\bar{\beta};\bar{x}_0,\bar{U})&=\exp(j\bar{\beta}^\top
    (\bar{A}\bar{x}_0+\bar{H}\bar{U}))\Psi_{\bW}(\bar{G}^\top\bar{\beta})\label{eq:cfun_X}\\
    \psi_{\bX}(\bar{X};\bar{x}_0,\bar{U})&=\mathscr{F}^{-1}\left\{\Psi_{\bX}(\bar{\beta};\bar{x}_0,\bar{U})\right\}(-\bar{X})\label{eq:pdf_X}
\end{align}
where $\bar{\beta}={[\bar{\beta}_1^\top\ \bar{\beta}_2^\top\ \ldots\
\bar{\beta}_{N}^\top]}^\top\in \mathbb{R}^{(nN)}$ and $\bar{\beta}_k\in
\mathbb{R}^n$ for all $k\in \mathbb{N}_{[1,N]}$. \label{prop:FSRPD_U}
\end{prop}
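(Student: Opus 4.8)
The plan is to treat Proposition~\ref{prop:FSRPD_U} as the standard statement that an affine image of an absolutely continuous random vector is again absolutely continuous, with a characteristic function obtained by linearity of expectation. First I would recall that, since $\bw[\cdot]$ is IID and each $\bw_k$ is absolutely continuous with PDF $\psi_{\bw}$, the stacked disturbance $\bW$ is absolutely continuous with joint PDF $\prod_{k=0}^{N-1}\psi_{\bw}(\bar{w}_k)$ and, by independence, joint CF $\Psi_{\bW}(\bar{\alpha})=\prod_{k=0}^{N-1}\Psi_{\bw}(\bar{\alpha}_k)$ as in \eqref{eq:cfun_W}.

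The CF of $\bX$ then follows directly from its definition \eqref{eq:cfun_def} and the affine relation \eqref{eq:sys_concat}: treating $\bar{x}_0$ and $\bar{U}$ as deterministic,
\begin{align}
    \Psi_{\bX}(\bar{\beta};\bar{x}_0,\bar{U})
    &=\Exp_{\bW}\left[\exp\left(j\bar{\beta}^\top\left(\bar{A}\bar{x}_0+\bar{H}\bar{U}+\bar{G}\bW\right)\right)\right] \nonumber\\
    &=\exp\left(j\bar{\beta}^\top\left(\bar{A}\bar{x}_0+\bar{H}\bar{U}\right)\right)\Exp_{\bW}\left[\exp\left(j\left(\bar{G}^\top\bar{\beta}\right)^\top\bW\right)\right] \nonumber\\
    &=\exp\left(j\bar{\beta}^\top\left(\bar{A}\bar{x}_0+\bar{H}\bar{U}\right)\right)\Psi_{\bW}\left(\bar{G}^\top\bar{\beta}\right), \nonumber
\end{align}
which is \eqref{eq:cfun_X}. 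Relative to \cite[Prop.~P3]{VinodHSCC2017} (the uncontrolled case) the only new ingredient is the deterministic control term $\bar{H}\bar{U}$, which contributes nothing but the extra phase factor $\exp(j\bar{\beta}^\top\bar{H}\bar{U})$; the factorization \eqref{eq:cfun_W} is needed only if one wishes to expand $\Psi_{\bW}(\bar{G}^\top\bar{\beta})$ in terms of $\Psi_{\bw}$.

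To reach \eqref{eq:pdf_X} I would first establish that $\bX$ admits a PDF $\psi_{\bX}$, i.e.\ that $\bX$ is absolutely continuous. From the construction of $\bar{G}$ in \cite[Sec.~2]{SkafTAC2010}, $\bar{G}$ is block lower-triangular with identity diagonal blocks and is therefore invertible; an invertible linear transformation of an absolutely continuous random vector is absolutely continuous, and translation by the deterministic vector $\bar{A}\bar{x}_0+\bar{H}\bar{U}$ preserves this. Hence $\psi_{\bX}(\cdot;\bar{x}_0,\bar{U})$ exists and, being a PDF, is absolutely integrable, so it has the unique CF $\Psi_{\bX}(\bar{\beta};\bar{x}_0,\bar{U})=\mathscr{F}\{\psi_{\bX}(\cdot;\bar{x}_0,\bar{U})\}(-\bar{\beta})$; applying the Fourier inversion formula \eqref{eq:cfun_ift} then gives $\psi_{\bX}(\bar{X};\bar{x}_0,\bar{U})=\mathscr{F}^{-1}\{\Psi_{\bX}(\cdot;\bar{x}_0,\bar{U})\}(-\bar{X})$, i.e.\ \eqref{eq:pdf_X}. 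The computation is routine; the only step needing care is the absolute continuity of $\bX$, since that is precisely what licenses writing the PDF as an inverse Fourier transform.
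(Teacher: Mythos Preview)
Your proposal is correct and follows essentially the same route as the paper: the paper's proof is a one-line citation of \eqref{eq:sys_concat}, \eqref{eq:cfun_W}, and \cite[Property~P2]{VinodHSCC2017} (the standard affine CF rule), which you have simply unpacked from the definition. Your additional justification that $\bar{G}$ is invertible, hence $\bX$ is absolutely continuous and \eqref{eq:cfun_ift} applies, is a detail the paper leaves implicit but is worth making explicit.
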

\begin{proof}
From \eqref{eq:sys_concat}, \eqref{eq:cfun_W}, and~\cite[Property P2]{VinodHSCC2017}.
\end{proof}

In general, \eqref{eq:pdf_X} is a $nN$-dimensional integration 
\eqref{eq:cfun_ift}. However,
when the CF of $\bX$ is in a standard form, a closed-form
expression for $\psi_{\bX}(\cdot)$ can be obtained, and we compute
$\hat{r}_{\bar{x}_0}^{\rho(\bar{x}_0)}(\mathcal{S}, \mathcal{T})$ via
\eqref{eq:obj_probB}. Else, we can compute
$\hat{r}_{\bar{x}_0}^{\rho(\bar{x}_0)}(\mathcal{S}, \mathcal{T})$ using
$\Psi_{\bX}$ if the Fourier transform of $1_{\mathcal{S}}(\cdot)$ and
$1_{\mathcal{T}}(\cdot)$ is known and $ \Exp\left[  \bX^\top \bX \right]<\infty$
\cite[Sec. 4.2]{VinodHSCC2017}. Thus, we can solve Problem B, and thereby
Problem~\ref{prob:FT}a, using Proposition~\ref{prop:FSRPD_U} and \eqref{eq:obj_probB}
for arbitrary $\psi_{\bw}$.

Note that while scalability of this approach is contingent on high-dimensional
quadrature, 
this challenge is far more tractable than the computational and memory costs
associated with DPBDA. In general, we can compute \eqref{eq:obj_probB} for
arbitrary disturbance densities through Monte-Carlo simulations~\cite[Sec.
4.8]{PressNumerical2007},~\cite[Ch. 4.2.1]{genz2009computation} and quasi-Monte Carlo
simulations~\cite[Ch. 4.2.2]{genz2009computation}. 

\subsection{Gaussian disturbance}
\label{sub:GaussGenz}

When $\bw$ is a Gaussian random vector, the CF of $\bw\sim \mathcal{N}(\bar{m},\Sigma)$  ~\cite[Sec. 9.3]{GubnerProbability2006} is
\begin{align}
    \Psi_{\bw}(\bar{\alpha})&=\exp\left(j\bar{\alpha}^\top
\bar{m}-\frac{\bar{\alpha}^\top\Sigma\bar{\alpha}}{2}\right).\label{eq:cfun_gauss}
\end{align}
Using \eqref{eq:cfun_W}, \eqref{eq:cfun_gauss}, and
Proposition~\ref{prop:FSRPD_U}, $\psi_{\bX}(\cdot)$ is described by
\begin{subequations}
\begin{align}
    \bX&\sim \mathcal{N}\big(\bar{m}_{\bX},\Sigma_{\bX}\big)\label{eq:gaussX_pdf}\\
    \bar{m}_{\bX}&=\bar{G}(\bar{1}_{N\times 1}\otimes
    \bar{m})+\bar{A}\bar{x}_0+\bar{H}\bar{U} \label{eq:gaussX_mean} \\
    \Sigma_{\bX}&=\bar{G}(I_{N}\otimes \Sigma)\bar{G}^\top. \label{eq:gaussX_sigma}
\end{align}\label{eq:gaussX}%
\end{subequations}%

\begin{prop}\label{prop:logconcaveProbB}
    For convex $ \mathcal{U}$, $ \mathcal{S}$, and $ \mathcal{T}$, dynamics as in
    \eqref{eq:sys} and a Gaussian disturbance $\bw$, Problem B is log-concave.
\end{prop}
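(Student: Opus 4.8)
\emph{Proof strategy.} Fix the initial condition $\bar{x}_0$; then Problem~B is the maximization of $\hat{r}_{\bar{x}_0}^{\rho(\bar{x}_0)}(\mathcal{S},\mathcal{T})$ over the single finite-dimensional decision vector $\bar{U}=\rho(\bar{x}_0)\in\mathcal{U}^N\subseteq\mathbb{R}^{mN}$. The feasible set $\mathcal{U}^N$ is convex, being an $N$-fold Cartesian product of the convex set $\mathcal{U}$, so it suffices to prove that the objective is a log-concave function of $\bar{U}$; Problem~B is then a log-concave program (equivalently, after taking logarithms, the maximization of a concave function over a convex set). Writing $\mathcal{R}\triangleq\mathcal{S}^{N-1}\times\mathcal{T}\subseteq\mathcal{X}^N$ --- convex because it is a product of convex sets --- \eqref{eq:obj_probB} expresses the objective as $\int_{\mathcal{R}}\psi_{\bX}(\bar{X};\bar{x}_0,\bar{U})\,d\bar{X}$, and \eqref{eq:gaussX} gives $\psi_{\bX}(\bar{X};\bar{x}_0,\bar{U})=\phi_{\Sigma_{\bX}}(\bar{X}-\bar{m}_{\bX}(\bar{U}))$, where $\phi_{\Sigma_{\bX}}$ is the centered Gaussian density with covariance $\Sigma_{\bX}$, $\bar{m}_{\bX}$ is affine in $\bar{U}$ through the term $\bar{H}\bar{U}$ in \eqref{eq:gaussX_mean}, and $\Sigma_{\bX}$ does not depend on $\bar{U}$.

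The next step is to display the objective as a marginal of a jointly log-concave integrand. The covariance $\Sigma_{\bX}=\bar{G}(I_N\otimes\Sigma)\bar{G}^\top$ is positive definite: $\bw$ being absolutely continuous forces $\Sigma\succ0$, and $\bar{G}$ is invertible (it is block-triangular with identity diagonal blocks). Hence $\phi_{\Sigma_{\bX}}$ is a nondegenerate Gaussian density, which is log-concave since its logarithm is a negative-definite quadratic form plus a constant. The map $(\bar{X},\bar{U})\mapsto\bar{X}-\bar{m}_{\bX}(\bar{U})$ is affine, so $(\bar{X},\bar{U})\mapsto\psi_{\bX}(\bar{X};\bar{x}_0,\bar{U})=\phi_{\Sigma_{\bX}}(\bar{X}-\bar{m}_{\bX}(\bar{U}))$ is jointly log-concave, log-concavity being preserved under precomposition with affine maps. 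Also $1_{\mathcal{R}}(\bar{X})$ is log-concave (indicator of a convex set) and constant in $\bar{U}$. Therefore $F(\bar{X},\bar{U})\triangleq1_{\mathcal{R}}(\bar{X})\,\psi_{\bX}(\bar{X};\bar{x}_0,\bar{U})$ is jointly log-concave on $\mathcal{X}^N\times\mathbb{R}^{mN}$, as a pointwise product of log-concave functions.

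Finally, by Pr\'ekopa's theorem --- integrating a jointly log-concave function over a subset of its variables yields a log-concave function of the remaining variables, a consequence of the Pr\'ekopa--Leindler inequality --- the map $\bar{U}\mapsto\int_{\mathcal{X}^N}F(\bar{X},\bar{U})\,d\bar{X}=\int_{\mathcal{R}}\psi_{\bX}(\bar{X};\bar{x}_0,\bar{U})\,d\bar{X}=\hat{r}_{\bar{x}_0}^{\rho(\bar{x}_0)}(\mathcal{S},\mathcal{T})$ is log-concave in $\bar{U}$. Together with the convexity of $\mathcal{U}^N$, this shows Problem~B is log-concave. The only substantive ingredient is the appeal to Pr\'ekopa's theorem for preservation of log-concavity under marginalization; the log-concavity of each factor of $F$ and the convexity of $\mathcal{U}^N$ and $\mathcal{R}$ are routine. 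A minor point needing care is the nondegeneracy of $\Sigma_{\bX}$, which is automatic here from the assumption that $\bw$ has a PDF (and otherwise would be handled by working directly with the log-concave Gaussian measure supported on an affine subspace).
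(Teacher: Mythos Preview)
Your proof is correct and follows essentially the same route as the paper's: log-concavity of the Gaussian density, preservation under affine precomposition to obtain joint log-concavity of $\psi_{\bX}$ in $(\bar{X},\bar{U})$, convexity of the product sets $\mathcal{S}^{N-1}\times\mathcal{T}$ and $\mathcal{U}^N$, and then preservation of log-concavity under marginalization (the paper cites~\cite[Sec.~3.5.2]{BoydConvex2004} for what you name explicitly as Pr\'ekopa's theorem). Your write-up is somewhat more careful---you state joint log-concavity in $(\bar{X},\bar{U})$ rather than just ``in $\bar{U}$,'' and you verify nondegeneracy of $\Sigma_{\bX}$---but the argument is the same.
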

\begin{proof}
    From~\cite[Sec. 2.3]{DharmadhikariUnimodality1988}, $\by\sim
    \mathcal{N}(0,\Sigma_{\bX})$ is log-concave with respect to $\by$.
    By~\cite[Sec.  3.2.2]{BoydConvex2004}, $\psi_{\bX}$ \eqref{eq:gaussX_pdf}
    is log-concave in $\bar{U}$ since it is an affine transformation
    of $\psi_{\by}$ by $\by-\bar{m}_X$.  
    From~\cite[Sec. 2.3.2, Sec. 3.5.2]{BoydConvex2004}, 
    sets $ \mathcal{T}\times \mathcal{S}\times\ldots\times \mathcal{S}$ and $
    \mathcal{U}^N$ are convex, and
    $\hat{r}_{\bar{x}_0}^{\rho(\bar{x}_0)}(\mathcal{S}, \mathcal{T})$ is
    log-concave over $ \mathcal{U}^N$. Thus, Problem B is log-concave.
\end{proof}

Proposition~\ref{prop:logconcaveProbB} addresses Problem~\ref{prob:FT}b.
For stochastic linear systems with a Gaussian disturbance and polytopic $
\mathcal{S}$ and $\mathcal{T}$, \eqref{eq:obj_probB} is the integration of a
multivariate Gaussian random variable over a polytope. 
Efficient computation of \eqref{eq:obj_probB} and
log-concavity (Proposition~\ref{prop:logconcaveProbB}) enables
a scalable solution to Problem B when $\bw$ is Gaussian. 

\subsection{FTBU implementation for the Gaussian disturbance case}
\label{sub:FTBUimplement}

To solve \eqref{eq:obj_probB} when $\bw$ is Gaussian, we use Genz's
algorithm~\cite{GenzCode}, which is based on quasi-Monte-Carlo simulations and
Cholesky decomposition~\cite{GenzJCGS1992}.  Genz's algorithm provides an error
estimate that is the result of a trade-off between accuracy and computation
time.  We set the number of particles for the Monte-Carlo simulation so that the
error estimate is less than some $\epsilon>0$. 
This results in a runtime evaluation of
$\hat{r}_{\bar{x}_0}^{\rho(\bar{x}_0)}(\mathcal{S}, \mathcal{T})$ that is
dependent on $\bar{x}_0$, unlike typical Monte-Carlo simulations. %
To take the logarithm of
$\hat{r}_{\bar{x}_0}^{\rho(\bar{x}_0)}(\mathcal{S}, \mathcal{T})$ in
Proposition~\ref{prop:logconcaveProbB}, we set
$\hat{r}_{\bar{x}_0}^{\rho(\bar{x}_0)}(\mathcal{S}, \mathcal{T})=\epsilon$
if $\hat{r}_{\bar{x}_0}^{\rho(\bar{x}_0)}(\mathcal{S},
\mathcal{T})<\epsilon$.

While the convexity result in Proposition~\ref{prop:logconcaveProbB} ensures a
tractable, globally optimal solution to Problem B, the lack of a closed-form
expression for the objective \eqref{eq:obj_probB} requires black-box optimization
techniques. Further, since Genz's algorithm enforces an accuracy of only
$\epsilon$, the log-concavity of $\hat{r}_{\bar{x}_0}^{\rho(\bar{x}_0)}(\mathcal{S},
\mathcal{T})$  
may not be preserved. %
Hence the ideal solver for Problem B should handle the ``noisy''
evaluation of \eqref{eq:obj_probB} as an oracle, and solve a constrained
optimization problem. 

We use MATLAB's
\emph{patternsearch} to solve
Problem B, because it is based on direct search
optimization~\cite{kolda2003optimization} and can handle estimation errors in
\eqref{eq:obj_probB} efficiently. The solver is a derivative-free optimizer and
uses evaluations over an adaptive mesh to obtain feasible descents towards the
globally optimal solution. However, it requires a larger number of function
evaluations as compared to \emph{fmincon}. For linearly-constrained and
bound-constrained optimization problems (such as Problem B, which 
is linearly constrained when $ \mathcal{U}$ is a polytope), 
creating the mesh using \emph{generating set search} reduces the number
of function evaluations~\cite[Sec.  8]{kolda2003optimization}.

\subsection{Advantages and limitations of FTBU}

The main advantage of FTBU is that it does not require gridding of the 
state, input, or disturbance spaces.
Unlike the DPBDA~\cite{AbateHSCC2007}, which solves Problem A on a grid over $
\mathcal{S}$, irrespective of the size of the initial set of interest, the FTBU
solves Problem B at a desired $\bar{x}_0$. By converting the terminal time problem into
an optimization problem involving a multi-dimensional integral, FTBU
achieves higher computational speed at lower memory cost 
(Figure~\ref{fig:curse_of_dim}) for a given initial condition. 
Probabilistically verifying a set of initial conditions
would require performing FTBU over a grid on the state space 
(Figure~\ref{fig:consv}), thereby losing any computational advantage over DPBDA. 
An alternative approach for the verification problem relies on Lagrangian methods~\cite{GleasonCDC2017}.
While evaluating \eqref{eq:obj_probB} can be
computationally expensive for arbitrary disturbances,
for Gaussian disturbances we can compute
\eqref{eq:obj_probB} efficiently (see Section~\ref{sub:FTBUimplement}). Further,
since the dimension of the integral in \eqref{eq:obj_probB} is $nN$, 
large $n$ effectively limits the time horizon $N$.  Additionally, the lack of
feedback in $\rho(\cdot)$ implies $N$ cannot be large~\cite{LesserCDC2013}, as it may 
induce excessive conservatism in the underapproximation.

\section{Numerical example}
\label{sec:numex}

Consider the discrete-time chain of integrators, with state $\bx_k\in \mathbb{R}^n$, input $u_k\in [-1,1]$, a Gaussian
disturbance $\bw_k \sim \mathcal{N}(0,0.01I_n)$, sampling time $N_s=10$, and
time horizon $N=10$. 
\begin{align}
    \bx_{k+1}&= \left[ {\begin{array}{ccccc} 
    1 & N_s & \frac{1}{2}N_s^2 & \ldots  & \frac{1}{(n-1)!} N_s^{n-1}  \\   
    0 & 1   & N_s              &         &                             \\  
    \vdots & &                 & \ddots  & \vdots                      \\
    0 & 0 & 0 & \ldots & N_s                      \\
    0 & 0 & 0 & \ldots & 1                      \\
    \end{array} } \right]\bx_k \nonumber \\
    &\quad+ {\left[ {\begin{array}{cccc} 
    \frac{1}{n!} N_s^n & \ldots & \frac{1}{2} N_s & N_s  \end{array} }
    \right]}^\top u_k +\bw_k\label{eq:sys_I_chain}
\end{align}

All computations were performed using MATLAB on an Intel Core i7
CPU with 3.4GHz clock rate and 16 GB RAM.  MATLAB code for this work is
available at \url{http://hscl.unm.edu/files/code/LCSS17.zip}.

\subsection{Comparison of FTBU and DPBDA runtimes and bounds}

We first demonstrate 1) scalability of the underapproximation as compared to the
DPBDA, and 2) non-trivial lower bounds obtained using FTBU.
Figure~\ref{fig:curse_of_dim} shows how FTBU and DPBDA scale with state
dimension $n$, for $n \leq 40$. We solve Problems A and B with 
$ \mathcal{S}={[-10,10]}^n,\ \mathcal{T}={[-5,5]}^n$, and $\epsilon=0.01$.
For DPBDA,
we restrict the grid over $\mathcal{X}$ to $\mathcal{S}$ for $n\leq 3$.  
We approximate the disturbance space as $[-0.5,0.5]^n$,
based on the covariance matrix of $\bw_k$.  We discretize
$ \mathcal{X}$, $\mathcal{W}$, and $\mathcal{U}$ with grid spacings of 
$0.05$, $0.05$, and $0.1$,
respectively.  
As expected, FTBU implemented using \emph{patternsearch}
is slower than \emph{fmincon}, but both implementations
scale with dimension $n$ much better than DPBDA.

\begin{figure}
    \centering 
\includegraphics[width=0.75\linewidth]{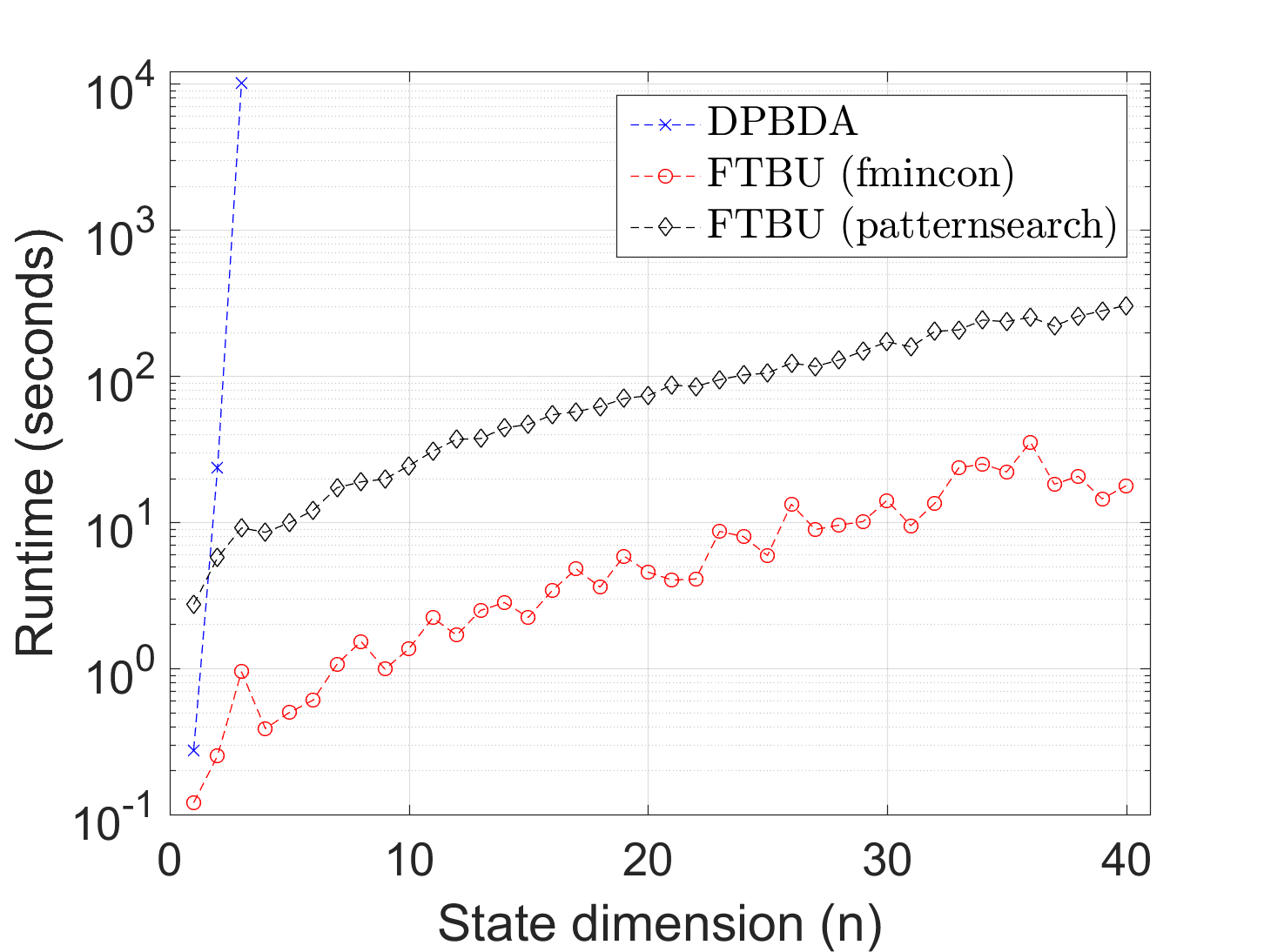} 
\caption{Scalability of DPBDA and FTBU with state dimension $n$ to compute
$\hat{V}_0^\ast(\bar{x}_0)$ and $\hat{W}_0^\ast(\bar{x}_0)$ for some
$\bar{x}_0\in \mathcal{X}$. Average computation time for the FTBU based on $20$
randomly chosen points in $ \mathcal{T}$ at each $n$.}
\label{fig:curse_of_dim}
\end{figure}

\begin{figure*}
\centering
\newcommand{\figw}{0.25}
\subfloat[]{\includegraphics[width=\figw\linewidth]{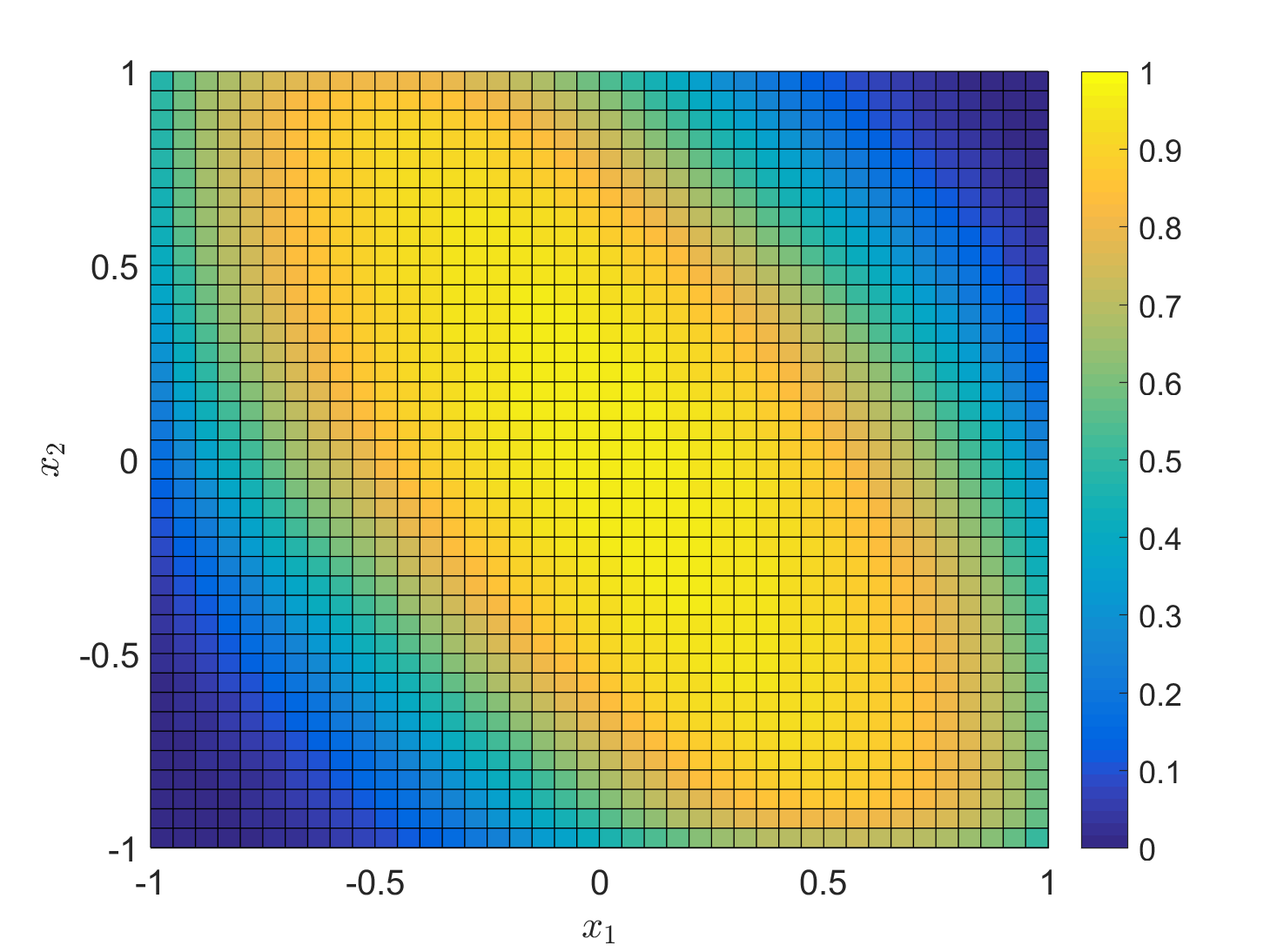}}
\qquad
\subfloat[]{\includegraphics[width=\figw\linewidth]{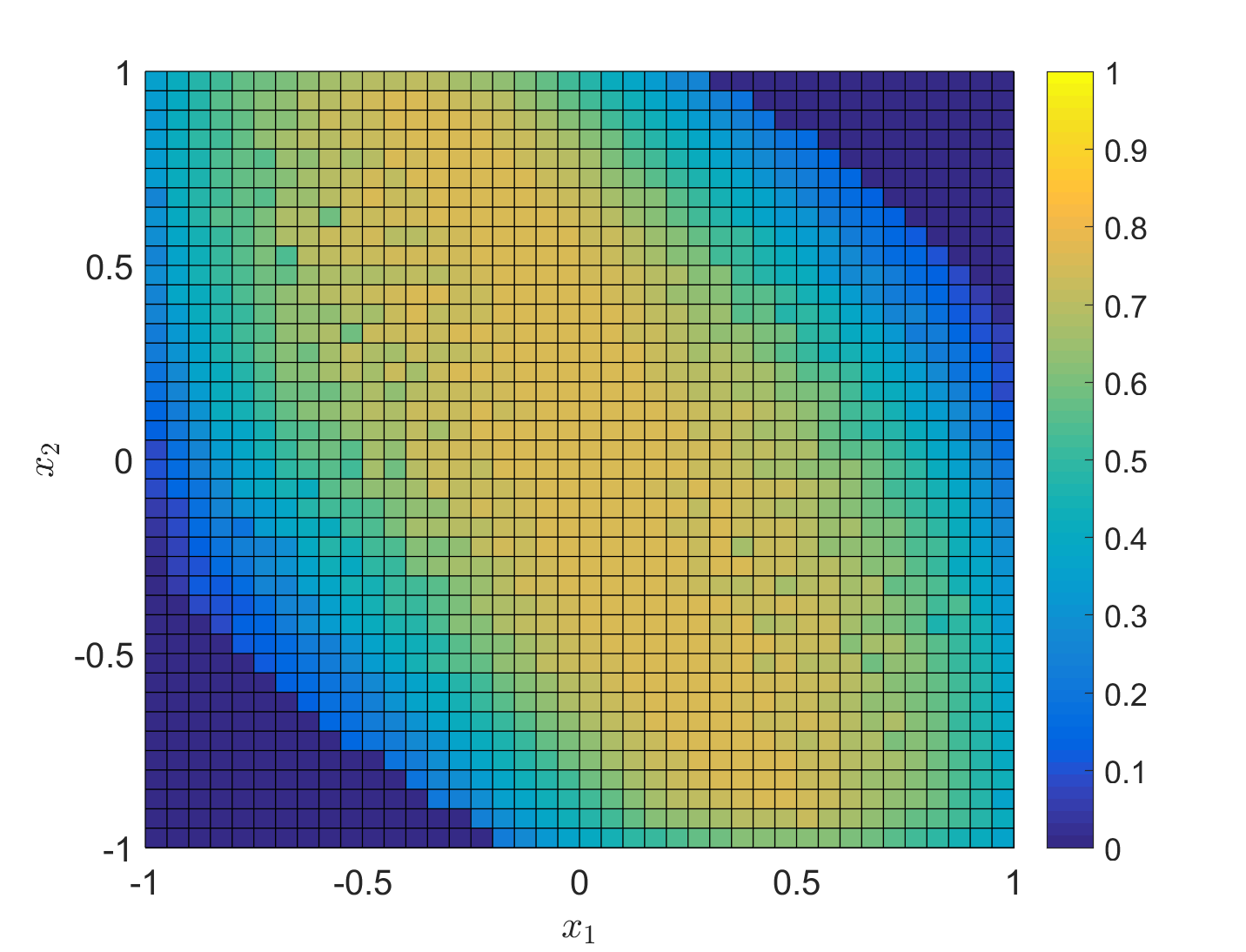}}
\qquad
\subfloat[]{\includegraphics[width=\figw\linewidth]{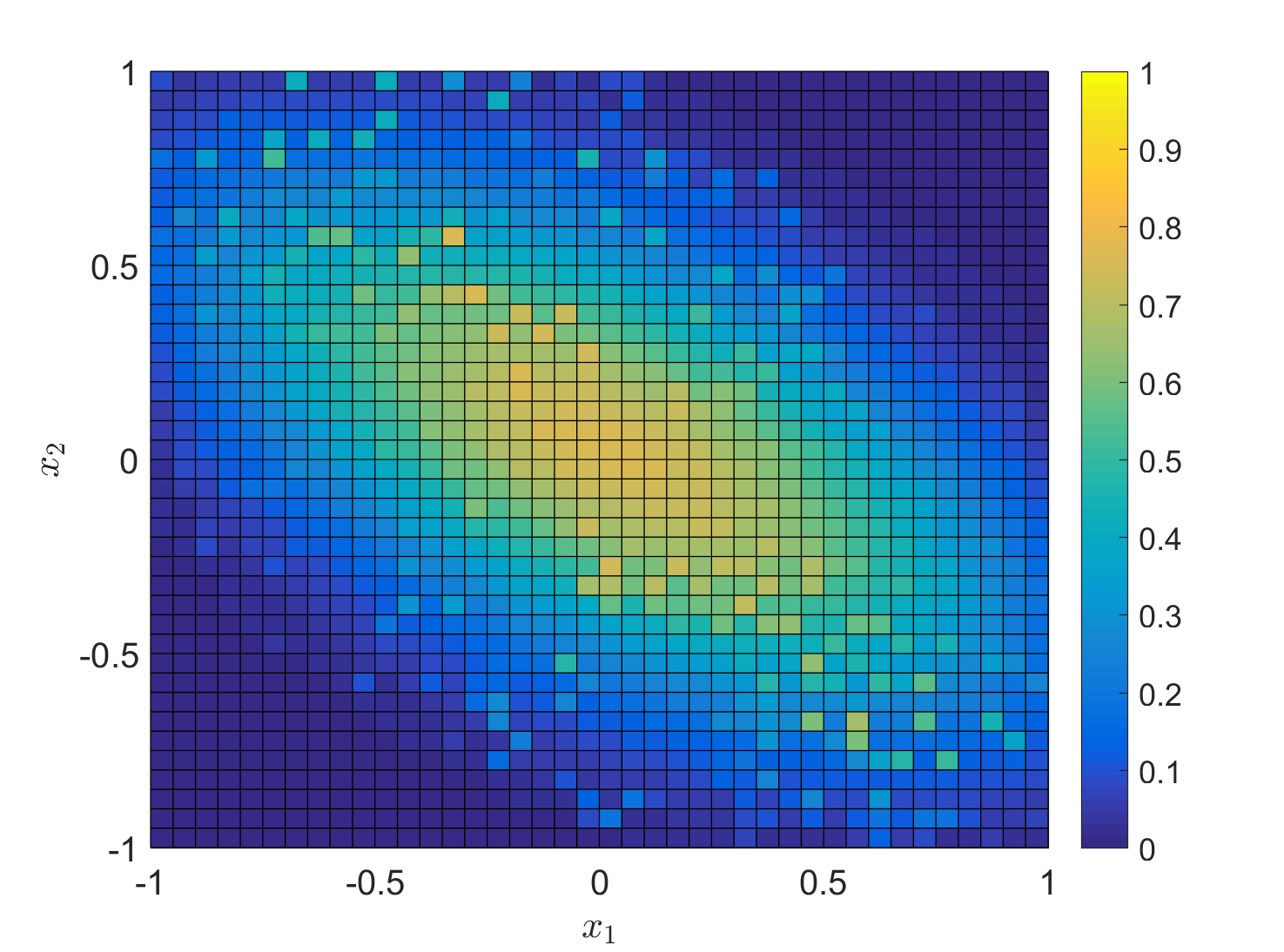}}\\
\subfloat[]{\includegraphics[width=\figw\linewidth]{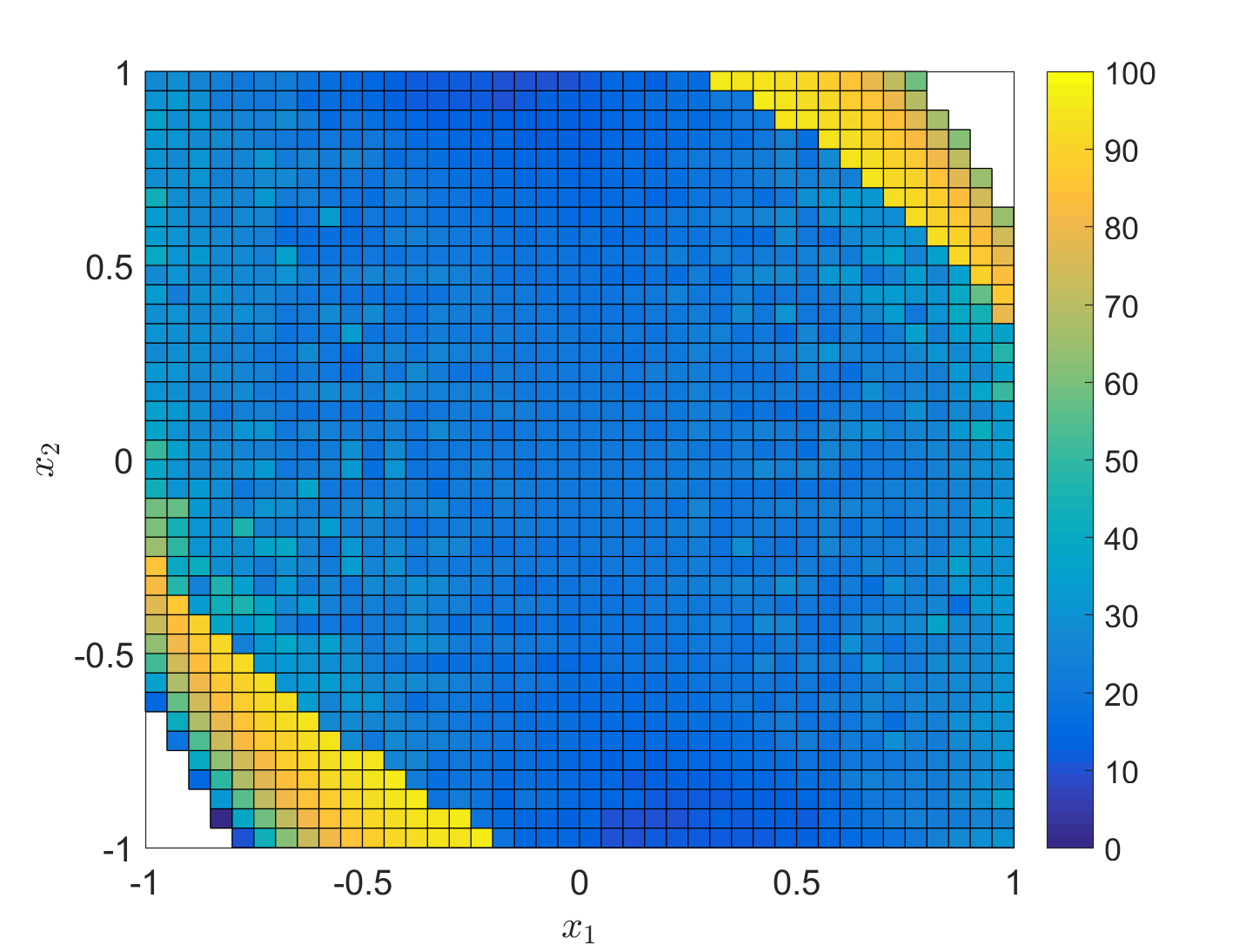}}
\qquad
\subfloat[]{\includegraphics[width=\figw\linewidth]{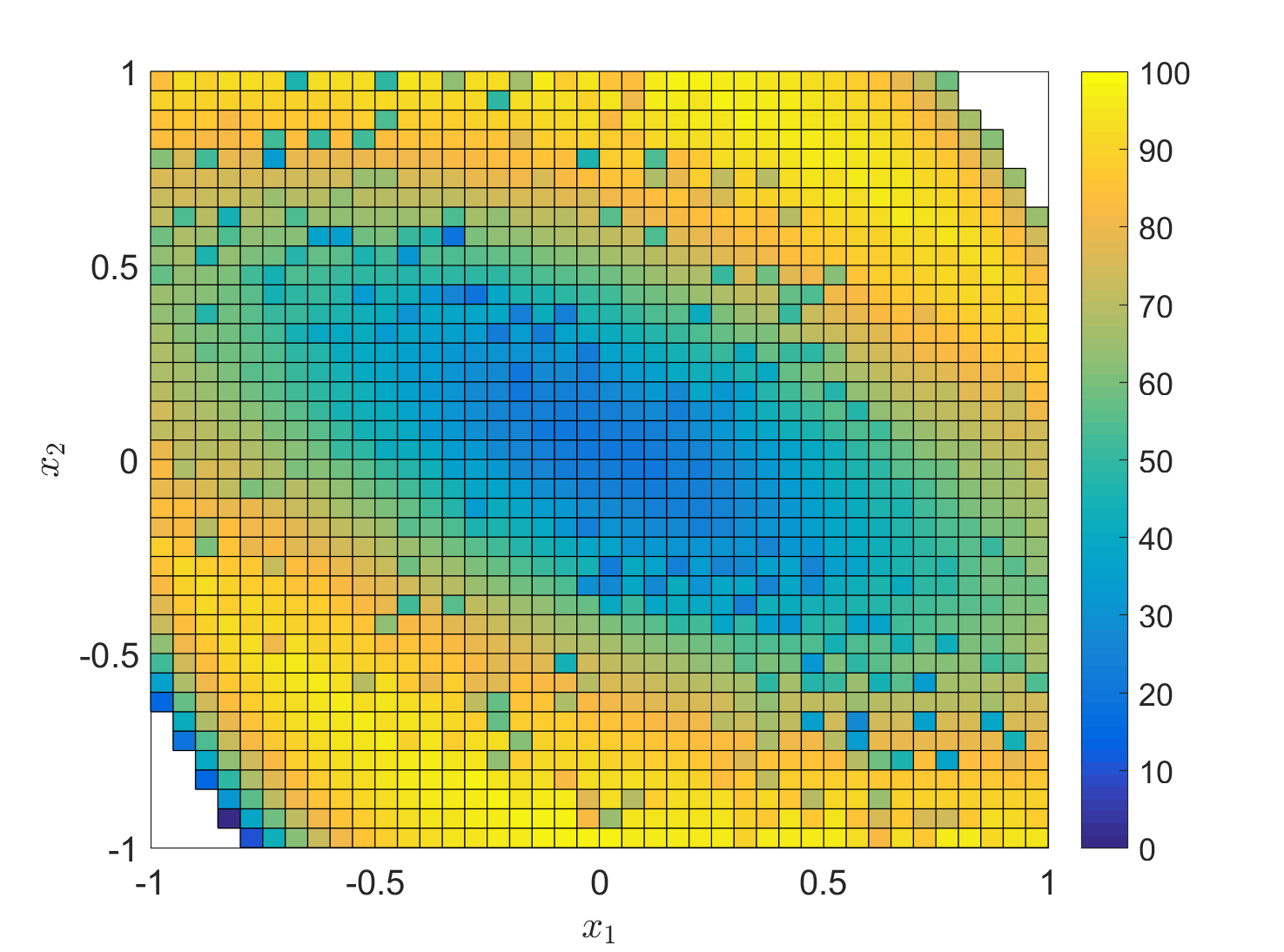}}
\qquad
\subfloat[]{\includegraphics[width=\figw\linewidth]{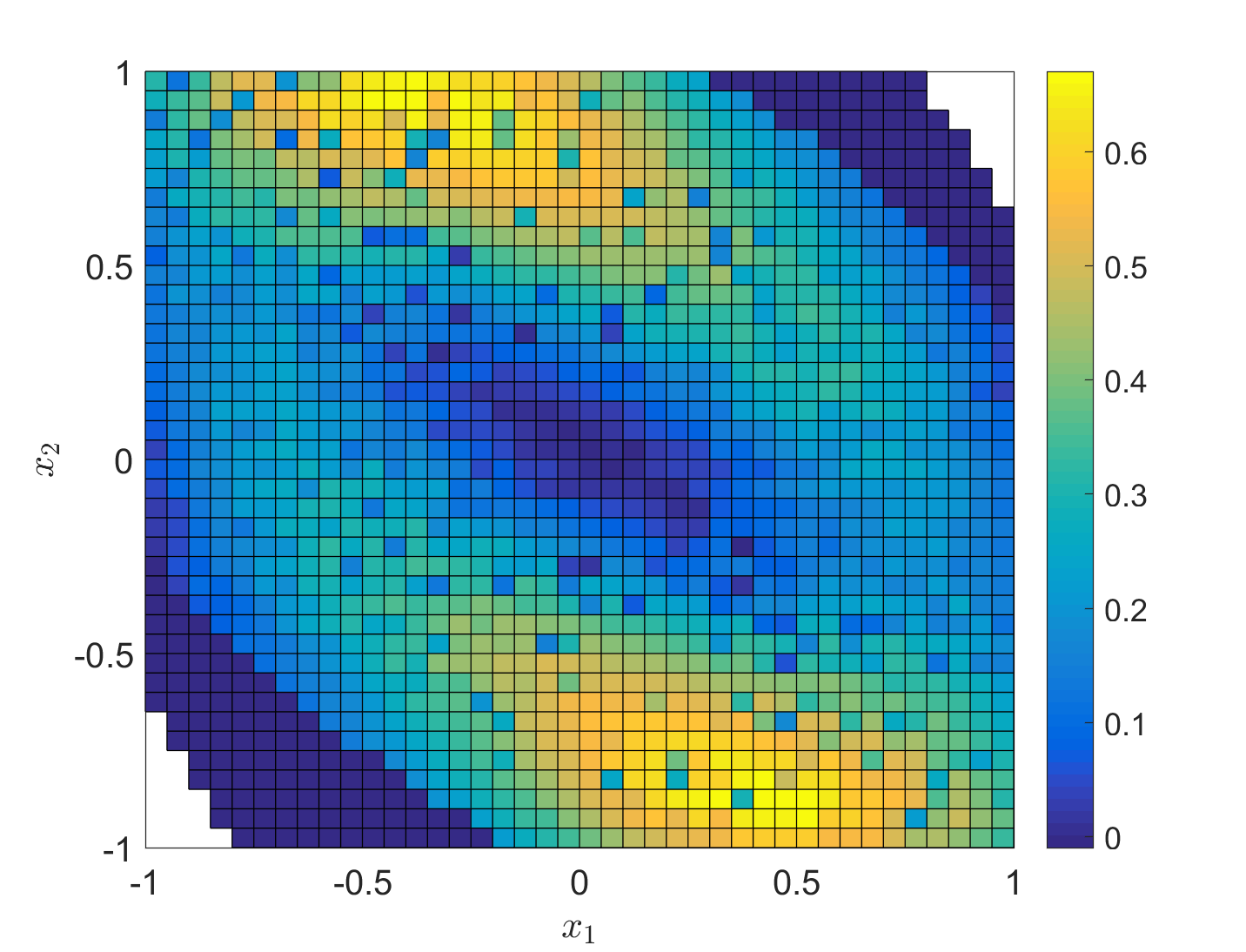}}\\
\caption{Conservativeness of the underapproximation for a double integrator
    ($n=2$). (a) Dynamic programming (DPBDA) approximation of $
    \hat{V}_0^\ast(\cdot)$;  Fourier transform-based underapproximation (FTBU) $
    \hat{W}_0^\ast(\cdot)$ computed (b) using \emph{patternsearch} and (c) using
    \emph{fmincon} for each grid point; Relative error defined as
    $\frac{\hat{V}_0^\ast(\cdot)-\hat{W}_0^\ast(\cdot)}{\hat{V}_0^\ast(\cdot)}\times100$
    for $\hat{V}_0^\ast(\cdot)>\epsilon$ and $ \hat{W}_0^\ast(\cdot)$ computed
    (d) using \emph{patternsearch} and (e) using  \emph{fmincon} for each grid
    point; Note that higher errors occur closer to the boundary, as expected,
    due to the lack of feedback; (f) Improvement in $\hat{W}_0^\ast(\cdot)$
    using \emph{patternsearch} instead of \emph{fmincon}; $\epsilon=0.01$
\label{fig:consv}}
\end{figure*}

Table~\ref{tab:test_points} summarizes the bounds on $\hat{V}_0^\ast(\cdot)$ for
various $\bar{x}_0$ at $n=40$.  For high $n$, \eqref{eq:sys_I_chain} becomes severely
under-actuated and the influence of the disturbance becomes very strong. This leads to the open-loop
formulation yielding trivial lower bounds,
$\hat{r}_{\bar{x}_0}^{\rho^\ast(\bar{x}_0)}(\mathcal{S}, \mathcal{T})=\epsilon$,
for many $\bar{x}_0$ in the original $\mathcal{S},\mathcal{T}$. We therefore set
$ \mathcal{T}={[-8,8]}^{40}$ and $\mathcal{S}={[-10,10]}^{40}$. 
While
Theorem 2 assures that $\hat{W}_0^\ast(\cdot)$ is a lower bound on
$\hat{V}_0^\ast(\cdot)$, this bound is subject to $\epsilon$, hence the
discrepancies between the numerical values for $\hat{W}_0^\ast(\cdot)$ and the
lower bounds on $\hat{V}_0^\ast(\cdot)$. We use $\epsilon=0.001$.

\begin{table}
    \centering
    \tabcolsep=0.11cm
    \begin{tabular}{|c|c|c|c|c|c|c|}
    \hline									%
    \multirow{2}{*}{\begin{minipage}{2cm}\centering
    Initial state of\\interest $\bar{x}_0^\top$ \end{minipage}} & \multirow{2}{*}{\begin{minipage}{1cm}\centering
    Member
            \\of\end{minipage}} & \multicolumn{2}{c|}{$\hat{W}_0^\ast(\bar{x}_0)$} &
    \multirow{2}{*}{\begin{minipage}{1cm}\centering
    $\hat{V}_0^\ast(\bar{x}_0)$\end{minipage}} & \multicolumn{2}{c|}{Runtime (s)}
            \\\cline{3-4}\cline{6-7}
            & & \emph{fm} & \emph{ps} & & \emph{fm} & \emph{ps}\\ \hline
            $[0\ 0\ 0\ \ldots\ 0]$ & $\mathcal{T}$ & $1$ & $1$ & $[0.999,1]$ &
            $12$ & $302$ \\ \hline
            $[2.5\ 2.5\ 2.5\ \ldots]$ & $\mathcal{T}$ & $0.984$ & $0.986$ &
            $[0.985,1]$ & $798$  & $1196$\\ \hline
    $[$-$8.5\ 8\ $-$8.5\ 8\ \ldots]$ & $ \mathcal{S}\setminus\mathcal{T}$
            & $0.500$ & $0.999$ & $[0.998,1]$ & $12$ & $441$ \\ \hline
   \end{tabular}
   \caption{Non-trivial bounds for $\hat{V}_0^\ast(\bar{x}_0)$ ($\bar{x}_0\in
   \mathbb{R}^{40}$); \emph{fm} and \emph{ps} is FTBU with \emph{fmincon}
   and \emph{patternsearch} respectively.}
   \label{tab:test_points}
\end{table}

\subsection{Conservativeness of FTBU}
\label{sub:consvFTBU}

Figure~\ref{fig:consv}(d) and (e) shows 
the relative error of FTBU with respect to DPBDA, with 
$\mathcal{T}={[-0.5,0.5]}^{2}$,
$\mathcal{S}={[-1,1]}^{2}$, grid spacing of $0.05$, and 
$\epsilon=0.01$.  
FTBU implemented using \emph{patternsearch} has
$77.57\%$ grid points with the relative error less than $30\%$ as compared to
$6.6\%$ grid points for \emph{fmincon}-based FTBU. This is also reflected in
Figure 2(a), (b), (c). 

The sharp rise in Figure 2(d) is due to points where
$\hat{W}_0^\ast(\bar{x}_0)=\epsilon$ and $\hat{V}_0^\ast(\bar{x}_0)>>\epsilon$,
resulting in a large relative error.  The conservativeness of FTBU
highlights the role of feedback %
in increasing the terminal time probability for any $\bar{x}_0\in \mathcal{X}$.
However, as seen in Table~\ref{tab:test_points}, for sufficiently large 
$\mathcal{S}, \mathcal{T}$, we obtain non-trivial lower bounds even for
high-dimensional systems.  Figure 2(f) and Table~\ref{tab:test_points} show
that FTBU with \emph{patternsearch} clearly outperforms \emph{fmincon} in %
the quality of the underapproximation, at the expense of computation time.

Lastly, note that the expected symmetry about the origin of the terminal time
probability for the system \eqref{eq:sys_I_chain} is not evident, unless a fine
grid is used (Figure 2(a)).
Table~\ref{tab:certificate} shows that FTBU can serve as a ``certificate''
for the validity of the grid spacing in DPBDA by relying on the conservativeness
established by Theorem~\ref{thm:consv}.  That is, the FTBU
underapproximation provides a grid-independent lower bound on the value function
$\hat{V}_0^\ast(\bar{x}_0)$ computed using DBPDA.
For example, for the double integrator, a grid spacing of $0.1$
will not give accurate results with DPBDA, since
$\hat{V}_0^\ast(\bar{x}_1)<\hat{W}_0^\ast(\bar{x}_1)$ contradicts 
Theorem~\ref{thm:consv}.%

\begin{table}
    \centering
    \tabcolsep=0.11cm
    \begin{tabular}{|c|c|c|c|c|}
    \hline									%
    Grid spacing & $0.1$ & $0.05$ & $0.01$ & $0.005$\\ \hline\hline
    $\hat{V}_0^\ast(\bar{x}_1)$ & $0.422$ & $0.506$ & $0.476$ & $0.478$ \\ \hline
    $\hat{V}_0^\ast(\bar{x}_2)$ & $0.527$ & $0.510$ & $0.483$ & $0.479$ \\ \hline
    Computation time (seconds) & $11.22$ & $42.68$ & $1206.59$ & $5710.06$ \\ \hline
    \end{tabular}
    \caption{Grid spacing in DPBDA ($n=2$, $\bar{x}_1=-\bar{x}_2={[0.1\
    0.9]}^\top$, $\mathcal{T}={[-0.5,0.5]}^{2}$, $\mathcal{S}={[-1,1]}^{2}$,
$\hat{W}_0^\ast(\bar{x}_1)=\hat{W}_0^\ast(\bar{x}_2)=0.436$, $ \epsilon =0.001$).}
    \label{tab:certificate}
\end{table}

\section{Conclusion}%
\label{sec:conc}

We show the conservativeness of the open-loop
formulation of the finite time horizon terminal hitting time stochastic
reach-avoid problem for stochastic linear systems, using conditional
expectations and sufficient conditions for Borel-measurability of
the value functions.  The open-loop formulation converts the verification
problem into a simpler optimization problem. The objective function is a
multi-dimensional integral, and an analytical expression of the integrand can be
obtained using Fourier transforms. For Gaussian disturbances, the objective
function can be evaluated efficiently and the optimization problem is
log-concave. Because the underapproximation technique does not rely on a grid, it
mitigates the curse of dimensionality, and provides non-trivial lower bounds on
the stochastic reach-avoid probability.  The method is demonstrated on 40D dynamical system. 

\bibliographystyle{IEEEtran}
\bibliography{IEEEabrv,shortIEEE,refs}

% Generated by IEEEtran.bst, version: 1.14 (2015/08/26)
\begin{thebibliography}{10}
\providecommand{\url}[1]{#1}
\csname url@samestyle\endcsname
\providecommand{\newblock}{\relax}
\providecommand{\bibinfo}[2]{#2}
\providecommand{\BIBentrySTDinterwordspacing}{\spaceskip=0pt\relax}
\providecommand{\BIBentryALTinterwordstretchfactor}{4}
\providecommand{\BIBentryALTinterwordspacing}{\spaceskip=\fontdimen2\font plus
\BIBentryALTinterwordstretchfactor\fontdimen3\font minus
  \fontdimen4\font\relax}
\providecommand{\BIBforeignlanguage}[2]{{%
\expandafter\ifx\csname l@#1\endcsname\relax
\typeout{** WARNING: IEEEtran.bst: No hyphenation pattern has been}%
\typeout{** loaded for the language `#1'. Using the pattern for}%
\typeout{** the default language instead.}%
\else
\language=\csname l@#1\endcsname
\fi
#2}}
\providecommand{\BIBdecl}{\relax}
\BIBdecl

\bibitem{SummersAutomatica2010}
S.~Summers and J.~Lygeros, ``Verification of discrete time stochastic hybrid
  systems: {A} stochastic reach-avoid decision problem,'' \emph{Automatica},
  vol.~46, no.~12, pp. 1951--1961, 2010.

\bibitem{VinodHSCC2017}
A.~P. Vinod, B.~HomChaudhuri, and M.~Oishi, ``Forward stochastic reachability
  analysis for uncontrolled linear systems using {F}ourier transforms,'' in
  \emph{Hybrid Systems: Comp. \& Control}, pp. 35--44.

\bibitem{HomChaudhuriACC2017}
B.~HomChaudhuri, A.~P. Vinod, and M.~Oishi, ``Computation of forward stochastic
  reach sets: Application to stochastic, dynamic obstacle avoidance,'' in
  \emph{American Control Conf.}, Seattle, WA, 2017.

\bibitem{MaloneHSCC2014}
N.~Malone, K.~Lesser, M.~Oishi, and L.~Tapia, ``Stochastic reachability based
  motion planning for multiple moving obstacle avoidance,'' in \emph{Proc.
  Hybrid Syst.: Comput. and Ctrl.}, 2014, pp. 51--60.

\bibitem{LesserCDC2013}
K.~Lesser, M.~Oishi, and R.~Erwin, ``Stochastic reachability for control of
  spacecraft relative motion,'' in \emph{IEEE Conf. Dec. Ctrl.}, 2013, pp.
  4705--4712.

\bibitem{kariotoglou2015multi}
N.~Kariotoglou, D.~M. Raimondo, S.~J. Summers, and J.~Lygeros, ``Multi-agent
  autonomous surveillance: a framework based on stochastic reachability and
  hierarchical task allocation,'' \emph{J. Dyn. Sys., Meas., Control}, vol.
  137, no.~3, pp. 031\,008--031\,008--14, 2014.

\bibitem{AbateAutomatica2008}
A.~Abate, M.~Prandini, J.~Lygeros, and S.~Sastry, ``Probabilistic reachability
  and safety for controlled discrete time stochastic hybrid systems,''
  \emph{Automatica}, vol.~44, no.~11, pp. 2724--2734, 2008.

\bibitem{AbateHSCC2007}
A.~Abate, S.~Amin, M.~Prandini, J.~Lygeros, and S.~Sastry, ``Computational
  approaches to reachability analysis of stochastic hybrid systems,'' in
  \emph{Proc. Hybrid Syst.: Comput. and Ctrl.}, 2007, pp. 4--17.

\bibitem{KariotoglouECC2013}
N.~Kariotoglou, S.~Summers, T.~Summers, M.~Kamgarpour, and J.~Lygeros,
  ``Approximate dynamic programming for stochastic reachability,'' in
  \emph{Proc. European Ctrl. Conf.}, 2013, pp. 584--589.

\bibitem{KariotoglouSCL2016}
N.~Kariotoglou, K.~Margellos, and J.~Lygeros, ``On the computational complexity
  and generalization properties of multi-stage and stage-wise coupled scenario
  programs,'' \emph{Sys. \& Ctr. Lett.}, vol.~94, pp. 63--69, 2016.

\bibitem{ManganiniCYB2015}
G.~Manganini, M.~Pirotta, M.~Restelli, L.~Piroddi, and M.~Prandini, ``Policy
  search for the optimal control of {Markov} {Decision} {Processes}: A novel
  particle-based iterative scheme,'' \emph{{IEEE} Trans. Cybern.}, pp. 1--13,
  2015.

\bibitem{DingAutomatica2013}
J.~Ding, M.~Kamgarpour, S.~Summers, A.~Abate, J.~Lygeros, and C.~Tomlin, ``A
  stochastic games framework for verification and control of discrete time
  stochastic hybrid systems,'' \emph{Automatica}, vol.~49, no.~9, pp.
  2665--2674, 2013.

\bibitem{GenzJCGS1992}
A.~Genz, ``Numerical computation of multivariate normal probabilities,''
  \emph{J. of Comp. and Graph. Stat.}, vol.~1, no.~2, pp. 141--149, 1992.

\bibitem{GleasonCDC2017}
J.~Gleason, A.~Vinod, and M.~Oishi, ``Underapproximation of reach-avoid sets
  for discrete-time stochastic systems via {L}agrangian methods,'' in
  \emph{IEEE Conf. Dec. Ctrl.}, https://arxiv.org/abs/1704.03555.

\bibitem{GubnerProbability2006}
J.~A. Gubner, \emph{Probability and random processes for electrical and
  computer engineers}.\hskip 1em plus 0.5em minus 0.4em\relax Cambridge Univ.
  Press, 2006.

\bibitem{ChowProbability1997}
Y.~Chow and H.~Teicher, \emph{Probability Theory: Independence,
  Interchangeability, Martingales}, 3rd~ed.\hskip 1em plus 0.5em minus
  0.4em\relax Springer New York, 1997.

\bibitem{SteinFourier1971}
E.~M. Stein and G.~L. Weiss, \emph{Introduction to Fourier analysis on
  Euclidean spaces}.\hskip 1em plus 0.5em minus 0.4em\relax Princeton Univ.
  Press, 1971, vol.~1.

\bibitem{CramerMathematical1961}
H.~Cram{\'e}r, \emph{Mathematical methods of statistics}.\hskip 1em plus 0.5em
  minus 0.4em\relax Princ. Univ. Pr., 1961.

\bibitem{BertsekasSOC1978}
D.~Bertsekas and S.~Shreve, \emph{Stochastic optimal control: The discrete time
  case}.\hskip 1em plus 0.5em minus 0.4em\relax Academic Press, 1978.

\bibitem{SkafTAC2010}
J.~Skaf and S.~Boyd, ``Design of affine controllers via convex optimization,''
  \emph{{IEEE} Trans. Auto. Ctr.}, vol.~55, no.~11, pp. 2476--87, 2010.

\bibitem{TaoAnalysisII}
T.~Tao, \emph{Analysis II}, 2nd~ed.\hskip 1em plus 0.5em minus 0.4em\relax
  Hindustan Book Agency, 2009.

\bibitem{NowakAP1985}
A.~S. Nowak, ``Universally measurable strategies in zero-sum stochastic
  games,'' \emph{The Annals of Probability}, pp. 269--287, 1985.

\bibitem{HimmelbergMOR1976}
C.~J. Himmelberg, T.~Parthasarathy, and F.~S. VanVleck, ``Optimal plans for
  dynamic programming problems,'' \emph{Mathematics of Operations Research},
  vol.~1, no.~4, pp. 390--394, 1976.

\bibitem{VinodArxiv2017LCSS}
A.~P. Vinod and M.~Oishi, ``Scalable underapproximation for stochastic
  reach-avoid problem for high-dimensional {LTI} systems using {F}ourier
  transforms,'' https://arxiv.org/abs/1703.02135.

\bibitem{PressNumerical2007}
W.~Press, S.~Teukolsky, W.~Vetterling, and B.~Flannery, \emph{Numerical
  recipes: The art of scientific computing}.\hskip 1em plus 0.5em minus
  0.4em\relax Cambridge Univ. Pr., 2007.

\bibitem{genz2009computation}
A.~Genz and F.~Bretz, \emph{Computation of multivariate normal and t
  probabilities}.\hskip 1em plus 0.5em minus 0.4em\relax Springer Science \&
  Business Media, 2009, vol. 195.

\bibitem{DharmadhikariUnimodality1988}
S.~Dharmadhikari and K.~Joag-Dev, \emph{Unimodality, convexity, and
  applications}.\hskip 1em plus 0.5em minus 0.4em\relax Elsevier, 1988.

\bibitem{BoydConvex2004}
S.~Boyd and L.~Vandenberghe, \emph{Convex optimization}.\hskip 1em plus 0.5em
  minus 0.4em\relax Cambridge Univ. Press, 2004.

\bibitem{GenzCode}
\BIBentryALTinterwordspacing
A.~Genz, ``{QSCMVNV}.'' [Online]. Available:
  \url{http://www.math.wsu.edu/faculty/genz/software/matlab/qscmvnv.m}
\BIBentrySTDinterwordspacing

\bibitem{kolda2003optimization}
T.~G. Kolda, R.~M. Lewis, and V.~Torczon, ``Optimization by direct search: New
  perspectives on some classical and modern methods,'' \emph{SIAM review},
  vol.~45, no.~3, pp. 385--482, 2003.

\end{thebibliography}
\end{document}